\algnewcommand\algorithmicreturn{\textbf{return}}
\algnewcommand\RETURN{\algorithmicreturn}
\algnewcommand\algorithmicprocedure{\textbf{procedure}}
\algnewcommand\PROCEDURE{\item[\algorithmicprocedure]}%
\algnewcommand\algorithmicendprocedure{\textbf{end procedure}}
\algnewcommand\ENDPROCEDURE{\item[\algorithmicendprocedure]}%
\algnewcommand{\algvar}[1]{{\text{\ttfamily\detokenize{#1}}}}
\algnewcommand{\algarg}[1]{{\text{\ttfamily\itshape\detokenize{#1}}}}
\algnewcommand{\algproc}[1]{{\text{\ttfamily\detokenize{#1}}}}
\algnewcommand{\algassign}{\leftarrow}
\newtheorem{idea}{Idea}
\newtheorem{theorem}{Theorem}[section]
\newtheorem{definition}{Definition}[section]
\newtheorem{corollary}{Corollary}[section]
\newtheorem{proposition}{Proposition}[section]
\newtheorem{lemma}{Lemma}[section]
\newtheorem{observation}{Observation}[section]
\newtheorem{example}{Example}[section]
\newcommand{\gomdci}{General-OMDCI~}
\newcommand{\omdci}{OMDCI~}
\newcommand{\omdciplus}{OMDCI+~}
\newcommand{\qed}{\hfill\hbox{\rlap{$\sqcap$}$\sqcup$}}
\newenvironment{proof}{\noindent \emph{Proof.\,}}{\qed}
\title{On the difficulty of order constrained pattern matching with applications to feature matching based malware detection.}
\author{Adiesha Liyanage\footnote{Gianforte School of Computing, Montana State University, Bozeman, MT 59717, USA. Email: {\tt a.liyanaralalage@montana.edu}}
\and
Braeden Sopp\footnote{Gianforte School of Computing, Montana State University, Bozeman, MT 59717, USA. Email: {\tt braeden.sopp@student.montana.edu}}
\and 
Binhai Zhu\footnote{Gianforte School of Computing, Montana State University, Bozeman, MT 59717, USA. Email: {\tt bhz@montana.edu}}
}
\date{}
\begin{document}

\maketitle

\begin{abstract}
Cybercrime, primarily in the form of malware, imposes substantial financial burden on its victims, with projections estimating its cost reaching \$10 trillion by 2025.
Malware detection in low-level programs using static analysis began as early as 2003, and feature extraction remains the dominant approach today.
This research enabled subsequent signature-based and AI-based detection algorithms, all of which rely on feature matching in various ways.
However, no theoretical analysis has been conducted on malware detection using algorithms based on features matching.
In this paper, we examine malware detection through feature matching by formalizing the problem based on two observations from reality: (1) Most malware samples received by a single company belong to the same family or subfamily, exhibiting similarities in the low-level program structure, which consists of essentially disjoint blocks (each with a single entry and exit point); (2) Certain critical instruction sequences---such as opening a file followed by write operations, possibly preceded by API calls---are indicative of malware but not a guarantee.

We formulate the problem of low-level malware detection using algorithms based on feature matching as Order-based Malware Detection with Critical Instructions (General-OMDCI): 
given a pattern in the form of a sequence \(M\) of colored blocks, where each block contains a critical character (representing a unique sequence of critical instructions potentially associated with malware but without certainty), and a program \(A\), represented as a sequence of \(n\) colored blocks with critical characters, the goal is to find two subsequences, \(M'\) of \(M\) and \(A'\) of \(A\), with blocks matching in color and whose critical characters form a permutation of each other.
When $M$ is a permutation in both colors and critical characters the problem is called OMDCI. If we additionally require $M'=M$, then the problem is called OMDCI+; if in this case $d=|M|$ is used as a parameter, then the OMDCI+ problem is easily shown to be FPT. 
Our main (negative)
results are on the cases when $|M|$ is arbitrary and are summarized as follows:
\begin{enumerate}
    \item OMDCI+ is NP-complete, which implies OMDCI is also NP-complete.
    \item For the special case of OMDCI, deciding if the optimal solution has length $0$ (i.e., deciding if no part of \(M\) appears in \(A\)) is co-NP-hard. As a result, the OMDCI problem does not admit an FPT algorithm unless P=co-NP.
\end{enumerate}
In summary, our results imply that using algorithms based on feature matching to identify malware or determine the absence of malware in a given low-level program are both hard.


\end{abstract}


\section{Introduction}

Cybercrime, primarily carried out through malware attacks, occurs daily. 
Projections in 2020 indicated that its cost is expected to grow at an annual rate of 15\%, reaching an estimated \$10.5 trillion, up from \$3 trillion in 2015 \cite{Morgan2020}.
This cost exceeds the annual GDP of every country except the United States and China. Therefore, substantial research is necessary to prevent or at least mitigate malware spread.
Notably, Christodorescu and Jha conducted the first research on extracting malware features using static analysis in low-level programs as early as 2003 \cite{DBLP:conf/uss/ChristodorescuJ03}.

Currently, feature extraction algorithms (and matching algorithms) are among the most widely used techniques for malware detection and analysis \cite{Aonzo23,Bonett18,Gagnon17,Hu13,Jindal19,Zhong24a}.
These methods involve static analysis, dynamic analysis, or a combination of both to extract features such as API calls and control flow graphs. Static analysis requires a deep understanding of malware structure, whereas dynamic analysis involves executing the program to traverse most, if not all, of its execution paths.

On top of the features extracted, malware classification can be done using signature-based \cite{Coscia23,Seo23,Vu23,Wu23} and AI-driven algorithms, both methods are based on feature matching in different ways. 
Signature-based algorithms identify similar signature features in potential malware \cite{LiJia23}, while AI-driven algorithms learn these features automatically from prior analysis using known features from training data. 
The latter certainly depends on the ability to automatically learn features from the corresponding AI packages (e.g., a neural network) \cite{Lucas24}. In fact, data visualization has also become a viable solution for malware detection \cite{Jin20}. (As AI-based and visualization-based methods are not particularly relevant to this paper, we refer the reader to the following two papers for more details \cite{Jin20,Lucas24}.)

Recently, Zhong et al. conducted a systematic study to classify
malware into five families and 40 subfamilies \cite{Zhong24}. The five families are: {\tt adware}, {\tt worm}, {\tt trojan}, {\tt pua}, and {\tt virus}. The majority of subfamilies are under {\tt trojan} which has 21 subfamilies.
The basic idea of the research by Zhong et al. is to decompose a large program into {\em blocks}, each of which is a segment of the program with one entry point and one exit point, then applying machine learning methods to determine the similarity of blocks
to an existing dataset.

Although there has been extensive research on malware detection and classification from a practical side, less attention has been given to the algorithmic and computational complexity perspective. How hard is it to detect malware, even approximately? Can malware identified in one architecture B be used to detect malware in another architecture C? If so, what is the computational complexity of this process? As a starting point, we investigate some of these problems in this paper.

Note that many malware are caused by system-based critical
instructions, possibly preceded by API calls. Consider, for example, a {\tt potentially unwanted application (pua)}. A {\tt pua} is often produced by the remote injection of instructions into a local program, which then creates and opens a local file, writes unwanted contents into it, and ultimately forges and executes an unwanted application. In this process, the unique sequence of instructions \(\langle{\tt create}, {\tt open}, {\tt write}\rangle\) are critical instructions that collectively build a {\tt pua}. Note that this sequence can be denoted as a single letter/character.
However, the mere presence of these instructions does not necessarily imply the existence of {\tt pua}.

The general idea of our approach is based on these findings as well as the recent classification method by Zhong et al.; moreover, we follow an interesting finding by Ugarte-Pedrero et al., i.e., more than 70\% of malware samples caught at a company belong to a known family or subfamily \cite{DBLP:journals/tissec/Ugarte-PedreroG18}.
Consequently, identifying new malware based on its similarity to (a superset containing) other malware, as well as the sequence of critical instructions in them, is the key to this paper.

We formulate a low-level program at an assembly language level as a sequence of blocks (each with one entry and exit point); moreover, the blocks have some similarity scores with existing programs written under the same architecture (i.e., with identical assembly instructions) --- this can be done using existing AI packages \cite{Zhong24}. Then, we give each block a unique color that corresponds to the score. Once this is done, within each block we identify the
sequence of critical instructions, possibly preceded by some API calls, that are most likely to cause malware. (Note that, to avoid being caught easily, a malware creator is unlikely to inject more than one such sequence causing malware in one program, let alone in one block.) This can be done by matching the new program with the existing datasets or database; for instance, in \cite{Zhong24} Zhong et al. collected 25739 samples among all the 40 subfamilies. If a block contains no such sequence
of critical instructions, we could just ignore the block.

The formal formulation is the \gomdci problem: 
given a program $A$, as a sequence of colored
blocks each with a critical letter/character (representing the corresponding sequence of critical instructions), we match it with a known pattern $M$ by finding a common subsequence of blocks, in terms of the blocks colors, where both subsequences use the same critical characters an equal number of times. 
(When the context is clear, we simply refer to it as a character.) 
We consider a restricted variant that requires $M$ to be a permutation (in both colors and characters) and appears completely in $A$, which we refer to as \omdciplus. (The intuitive idea is that
if some part of $M$ could cause malware but we are not so sure, then we would like to find the whole $M$ in the input program $A$, without missing any part of it.) We also make another restriction
on the \gomdci problem by requiring $M$ to be a permutation (in both colors and characters); the resulting problem is called \omdci. Clearly, any hardness result on \omdci would also hold on \gomdci.

Our main results are two reductions from the NP-complete problems X3C and Hamiltonian Cycle (really, its complement), which were first shown to be NP-complete by Karp \cite{Karp72}. The first one, from X3C, shows that \omdciplus is NP-complete. The second one is more surprising: we show that, by a reduction from the Complement Hamiltonian Cycle problem (co-HC), deciding
if \omdci has an optimal solution of size zero (i.e., finding if there is no malware in the input program) is
co-NP-hard. The reductions, especially the second one, use new techniques to encode a permutation into a sequence, which might find applications in other problems involving sequences. We assume that readers are familiar with the NP-completeness and FPT concepts. For NP-completeness, we refer the reader to the original paper by Cook \cite{Cook71} and certainly to the classic textbook by Garey and Johnson \cite{DBLP:books/fm/GareyJ79}. For the concepts and properties of FPT, we refer the reader to the textbooks \cite{DF99,FG06}.

The paper is organized as follows. In Section 2, we present necessary definitions. In Section 3, we prove the NP-completeness of \omdciplus. In Section 4, we prove the co-NP-hardness of 0-OMDCI. At the end of both cases, we present some related negative results. We conclude the paper in Section 5.

\section{Preliminaries}
Let $n,i$ and $j$ be natural numbers. We define $[n]:=\{1,2,...,n\}$ and $[i,j]:=\{i,i+1,...,j\}$ for $i<j$. Given an alphabet \(\Sigma\), a string \(S\) over alphabet \(\Sigma\) of length \(n\) is defined as a function:
\[S: [n] \rightarrow \Sigma.\]
A subsequence \(S'\) of \(S\) of length \(k\) is defined by a strictly increasing sequence of indices:
\[I_{S'} = \{i_1, i_2, i_3, \ldots, i_k\}, \text{ where } 1 \leq i_1 < i_2 < \cdots < i_k,\]
such that
\[S'(j) = S(i_j), \quad \forall j \in [k].\]

We use \omdci to denote the problem Order-Based Malware Detection Using Critical Instructions. Given two alphabets \(\Sigma\) and \(\Gamma\), we define a \textbf{string with colored indices} \(R\) as a 3-tuple \(([k], \gamma, \sigma)\) where \(k \in \mathbb{N}\),
\[\gamma : [k] \rightarrow \Gamma, \text{ and } \sigma : [k] \rightarrow \Sigma.\]

For clarity and ease of comprehension, we denote a $3$-tuple $R$ as:
\[R \coloneq \frac{\sigma(1)}{\gamma(1)} \cdot \frac{\sigma(2)}{\gamma(2)} \cdots \frac{\sigma(k)}{\gamma(k)}.\]
We use \(\cdot\) as well as \(\circ\) to denote concatenations of strings with colored indices.
\begin{definition}\label{def:gomdic}
\gomdci problem

Given two alphabets \(\Sigma\) and \(\Gamma\), an input to the \gomdci problem is a pair of strings with colored intervals \((M, A)\) with
\[
M = ([m], \gamma_M, \sigma_M), \quad A = ([n], \gamma_A, \sigma_A).
\]
Here, \(\gamma_M\) and \(\gamma_A\) are strings over the alphabet \(\Gamma\). Similarly \(\sigma_M\) and \(\sigma_A\) are strings over \(\Sigma\). 
The decision version of this problem is to determine whether there is a common subsequence of length \(k>0\) between \(\gamma_M\) and \(\gamma_A\) defined over sequences of indices \(I_{M'} = \{a_1, a_2, \ldots, a_k\}\) and 
\(I_{A'}= \{b_1, b_2, \ldots ,b_k\}\) denoted as 
\(M'\) and \(A'\), respectively, with the following properties: 
\begin{enumerate}
    \item \(\gamma_M(a_l) = \gamma_A(b_l), ~~\forall l\in [k].\)
    \label{property:colorsaremakeacommonsubsequence}
    \item \(\mathcal{C}_{M'} = \mathcal{C}_{A'}\) where 
    \[\mathcal{C}_{M'} = \{\sigma_M(i) : i \in I_{M'}\}, \quad \mathcal{C}_{A'} = \{\sigma_A(i) : i \in I_{A'}\}.\]\label{property:setofcoveredcriticalinstructionsareequal}
\end{enumerate}
\end{definition}
In general for a subsequence \(\gamma'\) defined by sequence of indices \(\{i_1,...,i_{l}\}\) of the string \(\gamma\) in \((S, \gamma, \sigma)\), we denote the \textbf{multiset of critical instructions}, \(\{\sigma(i_1), \ldots, \sigma(i_l)\}\), in \(\gamma'\) as \(\mathcal{C}_{\gamma'}\).

\begin{example}
Let $\Gamma=\{1,2,3,4\}$, $\Sigma=\{a_1,a_2,a_3,a_4,a_5\}$, $m=4$ and $n=6$.
       \[ M =\frac{a_1}{1}\cdot \frac{a_2}{2}\cdot \frac{a_4}{1}\cdot \frac{a_3}{4} , ~{and}~       
        A = \frac{a_2}{1}\cdot \frac{a_4}{4}\cdot \frac{a_1}{2}\cdot\frac{a_3}{3}\cdot \frac{a_5}{4} \cdot\frac{a_3}{3} .\]
        The optimal solution is
        \[M'=\frac{a_1}{1}\cdot\frac{a_2}{2} , ~{and}~ A' = \frac{a_2}{1}\cdot \frac{a_1}{2} .\]    
\end{example}
Intuitively, in the \gomdci problem, we would like to find if
any part of some potential malware $M$ (based on prior analyses or AI) appears in a piece of program $A$.
Here, we define two variants of the \gomdci problem, which eventually helps us obtain the hardness results.

\begin{definition}
    \omdciplus problem

\omdciplus is a version of the \gomdci version with following restrictions on the input \(M\) and the solution:
\begin{enumerate}
    \item The strings \(\gamma_M\) and \(\sigma_M\) are \textbf{permutation strings} over the alphabets \(\Gamma\) and \(\Sigma\), respectively.
    \item The decision version of the problem requires determining whether there is a solution such that \(A' = \gamma_M\).
\end{enumerate}
Essentially, the goal is to answer whether the permutation string \(\gamma_M\) of colors appears as a subsequence in \(\gamma_A\) while covering the set of critical instructions associated with the triple \(M\).
\end{definition}
\begin{example}
   Let $\Gamma=\{1,2,3,4\}$, $\Sigma=\{a_1,a_2,a_3,a_4\}$, $m=4$ and $n=6$. An example for OMDCI+ is as follows:
        \[ M =\frac{a_2}{1}\cdot \frac{a_1}{2}\cdot \frac{a_4}{4}\cdot \frac{a_3}{3} , ~{and}~       
        A = \frac{a_1}{1}\cdot \frac{a_4}{4}\cdot \frac{a_2}{2}\cdot\frac{a_3}{3}\cdot \frac{a_3}{4} \cdot \frac{a_4}{3}.\]
        The solution is
        \[A' = \frac{a_1}{1}\cdot \frac{a_2}{2}\cdot\frac{a_3}{4}\cdot \frac{a_4}{3} .\]
\end{example}
Again, intuitively, imagine that we know some blocks in $M$ cause malware through a critical instruction sequence (critical characters in our abstraction) but we are unable to detect which ones. An approach to determining whether some other program $A$ also causes malware would be to determine if $M$ appears in $A$ --- without any element missing. Alternatively, we could instead search for a subset of blocks in $M$ that match blocks in $A$, which is considered in the following restricted variant: 
\begin{definition}\label{def:omdci}
    \omdci problem

     We add the following restrictions to the input triple \(M\) while retaining the original solution restraints from the \gomdci problem.
     \begin{enumerate}
            \item The strings \(\gamma_M\) and \(\sigma_M\) are permutation strings over alphabets \(\Gamma\) and \(\Sigma\).
        \end{enumerate}
\begin{example}
   Let $\Gamma=\{1,2,3,4\}$, $\Sigma=\{a_1,a_2,a_3,a_4\}$, $m=4$ and $n=6$. An example for OMDCI is as follows:
        \[ M =\frac{a_2}{1}\cdot \frac{a_1}{2}\cdot \frac{a_4}{4}\cdot \frac{a_3}{3} , ~{and}~       
        A = \frac{a_4}{1}\cdot \frac{a_1}{4}\cdot \frac{a_2}{2}\cdot\frac{a_4}{3}\cdot \frac{a_3}{4} \cdot \frac{a_2}{3}.\]
        There is no solution for this \omdci instance. In fact, we call this special case as 0-OMDCI (i.e., deciding if OMDCI has a $0$-solution). Note that \omdci is a special case of General-OMDCI, hence any hardness result on it would imply that \gomdci has the same hardness result.
\end{example}
\end{definition}

\section{NP-completeness of OMDCI+}

In this section, as a warm-up, we show that OMDCI+ is NP-complete. Recall that in this case we need to find
the whole (permutation) pattern $M$ in $A$.

\begin{theorem}
The decision version of OMDCI+ is NP-complete.
\label{thm1}
\end{theorem}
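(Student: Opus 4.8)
The plan is to prove NP-completeness by (i) observing membership in NP, and (ii) reducing from Exact Cover by 3-Sets (X3C). Membership is immediate: given a candidate index sequence $I_{A'}$ of length $|M|=m$, one checks in polynomial time that the colors along it match $\gamma_M$ pointwise and that the induced multiset of critical characters equals $\mathcal{C}_M$; since $\sigma_M$ is a permutation string, this last check is just verifying that each character of $\Sigma$ occurs exactly once, so the witness has polynomial size and is polynomially verifiable.

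For hardness, recall an X3C instance: a ground set $X=\{x_1,\dots,x_{3q}\}$ and a family $\mathcal{S}=\{S_1,\dots,S_t\}$ of $3$-element subsets of $X$; the question is whether some subfamily of exactly $q$ sets partitions $X$. The idea is to let the \emph{colors} encode the ``slot structure'' of a potential exact cover and the \emph{critical characters} encode which ground elements get covered. Concretely, I would build $M$ as a permutation string of length $m=3q$ (one block per slot to be filled, or one block per ground element), using a fresh color for each of the $q$ chosen sets' positions (so $M$ looks like $q$ consecutive ``triples'' of colors $c_1c_1c_1\,c_2c_2c_2\cdots$, or $q$ distinct colors each appearing three times — whichever makes the permutation constraint on $\sigma_M$ work out), and with $\sigma_M$ ranging over all $3q$ ground elements exactly once. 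Then $A$ is assembled as a concatenation of $t$ consecutive \emph{gadgets}, one per set $S_j\in\mathcal{S}$: the gadget for $S_j=\{x_a,x_b,x_c\}$ contributes three blocks whose critical characters are $x_a,x_b,x_c$ and whose colors are chosen so that these three blocks can collectively ``occupy'' exactly one color-triple of $M$. The key engineering point is to make the color alphabet of $A$ rich enough that from each set-gadget one may take either all three of its blocks (using up one full color-triple of $M$) or none of them — partial selection must be blocked. One clean way: give set $S_j$'s three blocks a private color $d_j$ shared only within that gadget for ``internal glue'' blocks, interleaved with blocks carrying the reusable slot-colors $c_1,\dots,c_q$; the arithmetic of subsequence length $m=3q$ together with the permutation-on-$\sigma_M$ requirement then forces a choice of exactly $q$ gadgets used fully, i.e., exactly $q$ sets, and the $\mathcal{C}_{A'}=\mathcal{C}_M$ condition forces those $q$ sets to cover all $3q$ elements without repetition — precisely an exact cover. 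Conversely, any exact cover yields such an $A'$ by taking, for each chosen set, its whole gadget contribution.

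I would then argue correctness in the two directions: a YES-instance of X3C gives $q$ disjoint sets covering $X$, and reading their gadgets left to right in $A$ and assigning their blocks to the matching color-triples of $M$ produces a subsequence $A'$ with $A'=\gamma_M$ (as colors) and $\mathcal{C}_{A'}=\{x_1,\dots,x_{3q}\}=\mathcal{C}_M$; conversely, given a valid $A'$, the length and color constraints pin down that $A'$ decomposes into $q$ groups each lying entirely inside one set-gadget, each contributing all three of that gadget's element-carrying blocks, and $\mathcal{C}_{A'}=\mathcal{C}_M$ forces the $q$ corresponding sets to be pairwise disjoint and exhaustive. The reduction is clearly polynomial ($|A|=O(t)$, $|M|=O(q)$, alphabets of size $O(t+q)$).

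The main obstacle — and where the construction must be done carefully — is enforcing the \emph{all-or-nothing} behavior of each set-gadget using only the subsequence relation on colors plus the single global counting/permutation constraint: nothing local prevents picking one or two blocks from a gadget, so the block-orderings, the choice of which colors are ``private'' versus ``reusable,'' and the exact value of the target length $m$ must conspire so that any deviation from full-gadget selections either overshoots/undershoots length $m$ or repeats a critical character (violating $\mathcal{C}_{A'}=\mathcal{C}_M$, which as a \emph{set} equality combined with $|I_{A'}|=m$ and $|\Sigma|=m$ actually forces a multiset/permutation match). Getting this packing argument airtight — and in particular making sure the permutation requirement on $\sigma_M$ is genuinely respected by the gadget design — is the crux; once the gadget is right, both directions of the equivalence are routine.
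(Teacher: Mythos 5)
Your overall strategy matches the paper's: NP membership by direct verification, plus a reduction from X3C in which colors encode ``slots'' and critical characters encode ground elements. The membership argument is fine. However, the hardness direction has a genuine gap, and you name it yourself: the entire difficulty of the theorem is the all-or-nothing gadget, and your proposal leaves it unconstructed. Worse, the two concrete devices you float do not survive the problem's constraints. First, a pattern $M$ of the form $c_1c_1c_1\,c_2c_2c_2\cdots$ is inadmissible, because OMDCI+ requires \emph{both} $\gamma_M$ and $\sigma_M$ to be permutation strings, so no color may repeat in $M$; your hedge ``whichever makes the permutation constraint on $\sigma_M$ work out'' overlooks that the constraint also binds $\gamma_M$. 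Second, ``private'' glue colors $d_j$ used only inside $A$'s gadget for $S_j$ enforce nothing: since the solution must satisfy $A'=\gamma_M$ as a color sequence, any block of $A$ whose color does not occur in $M$ simply can never be selected, so it cannot glue anything. To make glue blocks mandatory they must appear in $M$, each exactly once with its own fresh critical character, and then one must argue that their placement in $A$ synchronizes with the element-carrying blocks so that partial, cross-gadget selections are impossible. With your short construction ($|M|=3q$, three blocks per set) this fails: a subsequence of $A$ can realize $M$'s distinct colors by mixing one or two element blocks from several different sets, and if the mixed characters happen to cover all $3q$ elements once each, you get a yes-answer without an exact cover.

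This synchronization is exactly what the paper's construction supplies and what your sketch is missing. There, $M=T^{(1)}\cdots T^{(q)}$ has length $3mq$: each slot $i$ carries $3(m-1)$ fresh filler characters $C^i_1,\ldots,C^i_{3m-3}$ on colors $v^i_{j,k}$ plus the three base elements $b_{3i-2},b_{3i-1},b_{3i}$, and $A=X^{(1)}\cdots X^{(q)}$ interleaves, for each candidate set $S_j$, an element block $P^{(i)}_j$ with filler blocks $L^{(i)}_{j+1}$ and $H^{(i)}_j$ whose colors coincide but whose subscripts are shifted by one. Because every filler character and every $v$-color occurs exactly once in $M$, committing to one $P^{(i)}_j$ forces a unique way of threading the filler blocks through $X^{(i)}$, which in turn forbids taking element blocks from two different sets within the same slot; that is the packing argument you defer as ``routine,'' and it is in fact the heart of the proof. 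So the proposal is on the right track in spirit but does not yet constitute a proof: you would need to replace the $3q$-length pattern and private-color idea with a gadget of this synchronized-filler type (or an equivalent mechanism) and then carry out both directions of the equivalence against it.
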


\begin{proof}
The NP membership checking is easy: when a possible solution $A'$ is given, we first check if the block sequence match that of $M$ and then check if the set of critical characters is equal to that of $M$. If both tests pass, then return yes; otherwise, return no.

We then reduce Exact-Cover by 3-Sets (X3C) to OMDCI+. The input to X3C is a base set $B=\{b_1,b_2,...,b_{3q}\}$ of $3q$ elements (with $q$ being a positive integer) and a set $S$ 
of $m$ 3-sets in the form of $S=\{S_i|S_i=\{t_{i,1},t_{i,2},t_{i,3}\}\subset B, i\in [m]\}$. The problem is
to decide if $S$ contains a subset $T$ of $q$ 3-sets whose
union is exactly $B$.

For each $j\in [m]$, we construct three intervals with colors $v_{j,1},v_{j,2},v_{j,3}$ which can further become distinct
intervals by adding different superscripts $i\in [q]$. The elements in the base set $B$ would be used as distinct (critical) characters. In addition, we also construct $3m-3$ different characters $C_1,C_2,C_3,...,C_{3m-5},C_{3m-4},C_{3m-3}$, which can also become distinct by adding distinct superscripts.

We define $T^{(i)}$, with $i\in[q]$, which is related to the three elements $b_{3i-2}, b_{3i-1}$ and $b_{3i}$ in a base set $S_i$, as follows.

\[ T^{(i)} = \frac{C^i_{1}}{v^i_{1,1}}\cdot \frac{C^i_{2}}{v^i_{1,2}}\cdot \frac{C^i_{3}}{v^i_{1,3}}
\circ \frac{C^i_{4}}{v^i_{2,1}}\cdot \frac{C^i_{5}}{v^i_{2,2}}\cdot\frac{C^i_{6}}{v^i_{2,3}} \circ \frac{C^i_{7}}{v^i_{3,1}}\cdot \frac{C^i_{8}}{v^i_{3,2}}\cdot\frac{C^i_{9}}{v^i_{3,3}}\circ \cdots\cdots \]
\[~~~~~~~\circ\frac{C^i_{3m-5}}{v^i_{m-1,1}}\cdot \frac{C^i_{3m-4}}{v^i_{m-1,2}} \cdot\frac{C^i_{3m-3}}{v^i_{m-1,3}}
\circ\frac{b_{3i-2}}{v^i_{m,1}}\cdot \frac{b_{3i-1}}{v^i_{m,2}} \cdot\frac{b_{3i}}{v^i_{m,3}} .\]

Note that $T^{(i)}$ has a length of $3m$. We now construct the permutation $M$, with length $3mq$, as
\[ M = T^{(1)}T^{(2)}\cdots T^{(q)} .\]

We next construct the sequence $A$ which could contain $M$ as a subsequence. First, recall that
\[S=\{S_j|S_j=\{t_{j,1},t_{j,2},t_{j,3}\}\subset B, j\in [m]\} .\]
We then define $P^{(i)}_j$, with $i\in [q]$ and $j\in[m]$, which is related to $S_j=\{t_{j,1},t_{j,2},t_{j,3}\}$ as
\[P^{(i)}_j=\frac{t_{j,1}}{v^i_{j,1}}\cdot \frac{t_{j,2}}{v^i_{j,2}} \cdot \frac{t_{j,3}}{v^i_{j,3}},\]
then for $i\in [q]$ and $j\in [m]$
\[H^{(i)}_j=\frac{C^i_{3j-2}}{v^i_{j,1}}\cdot \frac{C^i_{3j-1}}{v^i_{j,2}} \cdot \frac{C^i_{3j}}{v^i_{j,3}} ,\]
and for $i\in[q]$ and $j\in[2,m]$
\[L^{(i)}_j=\frac{C^i_{3j-5}}{v^i_{j,1}}\cdot \frac{C^i_{3j-4}}{v^i_{j,2}} \cdot \frac{C^i_{3j-3}}{v^i_{j,3}} .\]

With these, we define for $i\in [q]$ the sequence $X^{(i)}$, which is of $9(m-1)+3=9m-6$ intervals and colors as follows:
\[ X^{(i)}=P^{(i)}_1L^{(i)}_2H^{(i)}_1\cdot P^{(i)}_2L^{(i)}_3H^{(i)}_2 \cdots \cdots
P^{(i)}_{m-1}L^{(i)}_{m}H^{(i)}_{m-1}\cdot P^{(i)}_{m} .\]

Finally, we construct the string $A$, which is of length
$(9m-6)q$ as follows:

\[ A = X^{(1)}X^{(2)}\cdots X^{(q)} .\]

Since $|M|=3mq$ and $|A|=(9m-6)q$, the reduction obviously takes polynomial time. To close the proof, we claim that X3C has a solution if and only if the intervals in $M$ appears as a subsequence in $A$ and the characters in $M$ is a permutation of that of the characters of the matched intervals in $A$.

The ``only if'' part is proved as follows. If X3C has a solution
$\{Z_1,Z_2,...,Z_q\}$, for each $i\in [q]$ we just take a 
substring of three intervals $P^{(i)}_j$ in $X^{(i)}$ such that the colors in them equals to those in the $Z_i$. After $P^{(i)}_j$ is selected, first note that the colors of intervals in $L^{(i)}_{k+1}$ and $H^{(i)}_k$ are the same, while the corresponding first indexes (subscripts) of the corresponding intervals in $L^{(i)}_{k+1}$ are one more than those in $H^{(i)}_k$, for $k\in [m-1]$. Then, intervals of $C$-colors in $X^{(i)}$ can be selected to match those in $T^{(i)}$ in the same order of $C$-colors, while the indexes of the corresponding intervals (including those in $P^{(i)}_j$) match exactly those in $T^{(i)}$. Clearly, the selected intervals in all $X^{(i)}$'s meet the requirement; or, the instance $\langle M,A\rangle$ has a solution for the OMDCI problem.

For the ``if'' part, we first assume that $\langle M,A\rangle$ has a solution for the OMDCI problem. Due to that
the superscripts of $v$-intervals (and $C$-colors) in $X^{(i)}$ and $T^{(i)}$ are the same (i.e., all are $i$), the selected $v$-intervals in $A$ to match those in $T^{(i)}$ must come from $X^{(i)}$. Further, since $T^{(i)}$ is a permutation, exactly $3(m-1)$ distinct 
$C$-colors with superscripts $i$ must be selected and the three extra colors must be in $B$ --- by construction these three must correspond to a set in $S$ and the union of these colors in all $T^{(i)}$'s must be equal to $B$. Since this holds for all $i\in [q]$, the three non-$C$-colors selected in all $X^{(i)}$'s must cover $B$ exactly. In other words, we have a solution for X3C. 
\end{proof}                          

We show an example to help readers understand the proof better. The X3C instance is:
$q=2$, $B=\{1,2,3,4,5,6\}$, and $S=\{S_1,S_2,S_3,S_4\}$ with $S_1=\{1,3,5\}$, $S_2=\{2,5,6\}$, $S_3=\{2,4,6\}$ and $S_4=\{1,2,4\}$. We have

\[M=T^{(1)}T^{(2)} ,\]
with
\[ T^{(1)} = \frac{C^1_{1}}{v^1_{1,1}}\cdot \frac{C^1_{2}}{v^1_{1,2}}\cdot \frac{C^1_{3}}{v^1_{1,3}}
\circ \frac{C^1_{4}}{v^1_{2,1}}\cdot \frac{C^1_{5}}{v^1_{2,2}}\cdot\frac{C^1_{6}}{v^1_{2,3}} \circ \frac{C^1_{7}}{v^1_{3,1}}\cdot \frac{C^1_{8}}{v^1_{3,2}}\cdot\frac{C^1_{9}}{v^1_{3,3}}
\circ\frac{1}{v^1_{4,1}}\cdot \frac{2}{v^1_{4,2}} \cdot\frac{3}{v^1_{4,3}} ,\]
and
\[ T^{(2)} = \frac{C^2_{1}}{v^2_{1,1}}\cdot \frac{C^2_{2}}{v^2_{1,2}}\cdot \frac{C^2_{3}}{v^2_{1,3}}
\circ \frac{C^2_{4}}{v^2_{2,1}}\cdot \frac{C^2_{5}}{v^2_{2,2}}\cdot\frac{C^2_{6}}{v^2_{2,3}} \circ \frac{C^2_{7}}{v^2_{3,1}}\cdot \frac{C^2_{8}}{v^2_{3,2}}\cdot\frac{C^2_{9}}{v^2_{3,3}}
\circ\frac{4}{v^2_{4,1}}\cdot \frac{5}{v^2_{4,2}} \cdot\frac{6}{v^2_{4,3}} .\]
We also have 
\[A=X^{(1)}X^{(2)} ,\]
with
\[X^{(1)}= \frac{1}{v^1_{1,1}}\cdot \frac{3}{v^1_{1,2}}\cdot \frac{5}{v^1_{1,3}}
\circ \frac{C^1_{1}}{v^1_{2,1}}\cdot \frac{C^1_{2}}{v^1_{2,2}}\cdot\frac{C^1_{3}}{v^1_{2,3}} 
\circ \frac{C^1_{1}}{v^1_{1,1}}\cdot \frac{C^1_{2}}{v^1_{1,2}}\cdot\frac{C^1_{3}}{v^1_{1,3}} 
\circ \frac{2}{v^1_{2,1}}\cdot \frac{5}{v^1_{2,2}}\cdot \frac{6}{v^1_{2,3}}\circ \]
\[~~~~~~~~~\frac{C^1_{4}}{v^1_{3,1}}\cdot \frac{C^1_{5}}{v^1_{3,2}}\cdot \frac{C^1_{6}}{v^1_{3,3}}
\circ \frac{C^1_{4}}{v^1_{2,1}}\cdot \frac{C^1_{5}}{v^1_{2,2}}\cdot \frac{C^1_{6}}{v^1_{2,3}}
\circ \frac{2}{v^1_{3,1}}\cdot \frac{4}{v^1_{3,2}}\cdot \frac{6}{v^1_{3,3}}
\circ \frac{C^1_{7}}{v^1_{4,1}}\cdot \frac{C^1_{8}}{v^1_{4,2}}\cdot \frac{C^1_{9}}{v^1_{4,3}} \circ
\]
\[~~~~~~~~~
\frac{C^1_{7}}{v^1_{3,1}}\cdot \frac{C^1_{8}}{v^1_{3,2}}\cdot \frac{C^1_{9}}{v^1_{3,3}} \circ
\frac{1}{v^1_{4,1}}\cdot \frac{2}{v^1_{4,2}}\cdot \frac{4}{v^1_{4,3}}.~~~~~~~~~~~~~~~~~~~
\]

The solution is $Y^{(1)}Y^{(2)}$, where $Y^{(1)}$ is a
subsequence of $X^{(1)}$, covering the 3-set $S_1=\{1,3,5\}$. To be more
precise,
\[Y^{(1)}= \frac{1}{v^1_{1,1}}\cdot \frac{3}{v^1_{1,2}}\cdot \frac{5}{v^1_{1,3}}
\circ \frac{C^1_{1}}{v^1_{2,1}}\cdot \frac{C^1_{2}}{v^1_{2,2}}\cdot\frac{C^1_{3}}{v^1_{2,3}} 
\circ \frac{C^1_{4}}{v^1_{3,1}}\cdot \frac{C^1_{5}}{v^1_{3,2}}\cdot \frac{C^1_{6}}{v^1_{3,3}}
\circ
\frac{C^1_{7}}{v^1_{4,1}}\cdot \frac{C^1_{8}}{v^1_{4,2}}\cdot \frac{C^1_{9}}{v^1_{4,3}}.
\]
We leave the construction of $X^{(2)}$ and $Y^{(2)}$, which should cover the 3-set $S_3=\{2,4,6\}$, as an exercise for the readers. 


Theorem~\ref{thm1} implies that OMDCI (and \gomdci) are NP-hard. However, for the \omdciplus problem, if $|M|=d$ is a (small) parameter, then the problem is easily shown to be FPT (Fixed-Parameter Tractable). We just take $M$ and compute all the $d!$ permutations of its critical characters (while keeping the order of the $d$ colored blocks). Then we have $d!$ sequences $M_1,M_2,...,M_{d!}$ and we just need to check if one of them appear in $A$ as a subsequence in both colors and critical characters.

\begin{corollary}
Parameterized by the size of $M$, OMDCI+ is FPT.
\end{corollary}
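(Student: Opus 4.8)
The plan is to formalize the brute-force observation already sketched above. In an \omdciplus instance the colour string $\gamma_M$ is a permutation string, so the $d=|M|$ block colours of $M$ are pairwise distinct; hence any feasible solution $A'$ is a length-$d$ subsequence of $A$ whose colour string is \emph{exactly} $\gamma_M$ (this is the requirement ``$A'=\gamma_M$'') and whose multiset of critical characters equals $\{\sigma_M(1),\dots,\sigma_M(d)\}$. Reading the critical characters of $A'$ in order then yields some permutation $\pi$ of $[d]$ with $\sigma_{A'}(l)=\sigma_M(\pi(l))$ for all $l\in[d]$. So the first step is just to make this reformulation precise.

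Next I would define, for each permutation $\pi$ of $[d]$, the string with coloured indices
\[ M_\pi \coloneq \frac{\sigma_M(\pi(1))}{\gamma_M(1)}\cdot\frac{\sigma_M(\pi(2))}{\gamma_M(2)}\cdots\frac{\sigma_M(\pi(d))}{\gamma_M(d)}, \]
and observe that, by the previous paragraph, $\langle M,A\rangle$ is a yes-instance of \omdciplus if and only if $M_\pi$ occurs in $A$ as a subsequence with coloured indices for at least one of the $d!$ permutations $\pi$ --- i.e., there exist indices $b_1<\cdots<b_d$ with $\gamma_A(b_l)=\gamma_M(l)$ and $\sigma_A(b_l)=\sigma_M(\pi(l))$. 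Each such test is an ordinary subsequence-matching problem over the combined alphabet $\Gamma\times\Sigma$: scan $A$ once from left to right and greedily take, for $l=1,2,\dots,d$, the first position past the previously chosen one whose $(\text{colour},\text{character})$ pair equals $(\gamma_M(l),\sigma_M(\pi(l)))$; the standard exchange argument shows this greedy choice succeeds iff an occurrence exists, and it runs in $O(n)$ time. Enumerating the $d!$ permutations and running this test for each gives total running time $O(d!\cdot n)=f(d)\cdot\mathrm{poly}(n)$, which establishes membership in FPT.

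The argument is essentially routine; the only point needing care is the equivalence in the second step, namely that the two solution conditions of \omdciplus --- matching colours in order ($A'=\gamma_M$) together with $\mathcal{C}_{M'}=\mathcal{C}_{A'}$ --- really collapse to ``$M_\pi$ is a subsequence of $A$'' and not to something that also permutes block positions. This is precisely where the hypothesis that $\gamma_M$ and $\sigma_M$ are permutation strings is used: distinctness of the block colours pins down the matched blocks in the required colour order, so the multiset condition is equivalent to selecting one of the $d!$ in-order assignments of critical characters to blocks, with no extra case analysis. One should also remark that when $|M|$ is unbounded the factor $d!$ is no longer admissible, consistent with the NP-completeness of \omdciplus from Theorem~\ref{thm1}.
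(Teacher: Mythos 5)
Your proposal is correct and follows essentially the same route as the paper: enumerate the $d!$ permutations of the critical characters of $M$ while keeping the colour order fixed, and test each resulting pattern for subsequence occurrence in $A$, giving $O(d!\cdot n)$ time. The extra details you supply (the greedy leftmost-matching test and the justification that the permutation-string hypotheses collapse the solution conditions to plain subsequence matching) only flesh out the paper's sketch.
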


In the next section, we consider
the 0-OMDCI problem: what is the difficulty to decide that there is no malware in an input program $A$ at all (even when a malware pattern $M$ is given)? It turns out that this problem is co-NP-hard and the proof is more involved.

\section{0-OMDCI is co-NP-hard}

We state the main result in this section (also in this paper) as follows. Recall that 0-OMDCI is the problem of deciding if OMDCI has an optimal solution of size zero.
\begin{theorem}\label{thm:0-OMDCI}
Deciding if 0-OMDCI has a solution is co-NP-hard.
\end{theorem}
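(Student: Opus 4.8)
The plan is to establish co-NP-hardness by a polynomial-time many-one reduction from the complement Hamiltonian Cycle problem (co-HC), which is co-NP-hard because HC is NP-complete (Karp~\cite{Karp72}). Given a graph $G=(V,E)$ with $V=[n]$, I would build an OMDCI instance $(M,A)$ --- with $M$ a genuine permutation string in both colors and critical characters --- so that $G$ has a Hamiltonian cycle if and only if OMDCI$(M,A)$ admits a solution of length $>0$. Since the optimum is $0$ exactly when there is no positive solution, this gives $G\in\text{co-HC}\iff (M,A)\in\text{0-OMDCI}$, i.e. $\text{co-HC}\le_m \text{0-OMDCI}$, which proves Theorem~\ref{thm:0-OMDCI}. (Equivalently, one may view it as reducing HC to the complement of 0-OMDCI, the problem ``does OMDCI$(M,A)$ have a length-$\ge 1$ solution?'', which lies in NP.)

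For the construction I would think of a Hamiltonian cycle as a cyclic arrangement $a_1,a_2,\ldots,a_n$ of distinct vertices with every cyclically consecutive pair an edge. The colors of $M$ encode ``position in the cyclic order'' ($c_1,\ldots,c_n$) together with auxiliary synchronization colors; the critical characters of $M$ encode the $n$ vertices (these produce the permutation constraint via $\mathcal{C}_{M'}=\mathcal{C}_{A'}$) together with auxiliary marker characters. $M$ itself is one fixed, rigidly structured permutation: vertex slot $i$ carries vertex-character $i$ and position-color $c_i$, interleaved with blocks of ``scaffold'' colors and characters listed in strictly increasing index order, in the spirit of the $C^i_j$ device used in the proof of Theorem~\ref{thm1}. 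The sequence $A$ encodes the adjacency structure of $G$: for each of the $n$ positions it lists, as sub-blocks, the edges of $G$ available at that step, glued together with the scaffold colors so that (i) any color-matching subsequence of $A$ against $M$ is forced to march through the $n$ stages in order, committing to one edge per stage; (ii) consecutive stages are synchronized to share the endpoint they commit to, so the committed edges form a closed walk; and (iii) because $\mathcal{C}_{M'}=\mathcal{C}_{A'}$ forces the collected vertex-characters to be a permutation of $[n]$, that closed spanning walk is in fact a Hamiltonian cycle. A dedicated closing stage ties the last committed endpoint back to $a_1$ to realize the wrap-around edge $a_na_1$. The ``only if'' direction is then routine --- read the required subsequence of $A$ off a given Hamiltonian cycle --- while the ``if'' direction invokes the rigidity of the scaffold (each matched scaffold block forcing the next, exactly as in Theorem~\ref{thm1}) to argue that any nonzero solution must be a complete stage-by-stage trace, hence a Hamiltonian cycle.

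The main obstacle, and the reason this proof is more involved than Theorem~\ref{thm1}, is the all-or-nothing gap: 0-OMDCI concerns the \emph{optimum}, so I must guarantee that when $G$ has no Hamiltonian cycle there is not even a length-$1$ solution. This forces the scaffold to be engineered so that every nonempty partial match \emph{propagates} --- matching any single block of $M$ already commits the subsequence to a prefix of a full stage traversal that can only be completed into a closed spanning walk. Making this propagation airtight while simultaneously (a) keeping $\gamma_M$ and $\sigma_M$ honest permutation strings, so the instance is a legal OMDCI rather than merely a General-OMDCI instance, and (b) keeping $|A|$ polynomial in $n$, is the delicate part; in particular the stage-to-stage adjacency synchronization, achieved using only the subsequence relation and multiset-equality of critical characters as enforcement tools, is where the new ``encode-a-permutation-into-a-sequence'' technique is needed. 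I expect the bulk of the work to be the careful verification of the ``if'' direction, showing that the scaffold admits no spurious short or misaligned partial matches.
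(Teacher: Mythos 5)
Your high-level plan coincides with the paper's: reduce from co-HC by building an OMDCI instance $(M,A)$ with $M$ a permutation in both colors and characters such that $G$ has a Hamiltonian cycle iff there is a solution of size $>0$, and you correctly identify the crux --- the all-or-nothing propagation property, namely that when $G$ has no Hamiltonian cycle there must not exist even a size-$1$ solution. However, the proposal stops exactly where the proof actually lives: no concrete gadget is given that realizes the propagation, and the one concrete device you do invoke --- a scaffold ``in the spirit of the $C^i_j$ device used in the proof of Theorem~\ref{thm1}'' --- would, if used as in that proof, break the reduction. In the Theorem~\ref{thm1} construction the scaffold colors carry the \emph{same} critical character in $M$ and in $A$ (e.g.\ $C^i_1$ sits under $v^i_{1,1}$ in $T^{(i)}$ and also under a $v^i$-color in $X^{(i)}$); matching one such index of $M$ to its counterpart in $A$ already satisfies both the color condition and the multiset condition, yielding a legal size-$1$ solution independent of whether $G$ is Hamiltonian. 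That is harmless for OMDCI+, where the whole of $M$ must be matched, but it is fatal for 0-OMDCI, since a non-Hamiltonian graph would then still map to an instance with a positive solution.

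The missing idea, which is the actual technical content of the paper's proof, is to ensure that \emph{no color ever carries the same character in $M$ and in $A$}, so that any single matched index forces a chain of further matches. Concretely, the paper gives color $P_{i,j}$ character $p_{i,j}$ in $M$ but $s_{i,j}$ in $A$ (with $ASelection_i$ reversed so at most one pick per stage is possible); color $S_{i,j}$ carries $s_{i,j}$ in $M$ but $t_{i,j}$ in $A$; and color $T_{i,j}$ carries $t_{i,j}$ in $M$ but the neighbor characters $p_{1+(i \bmod n),\,\alpha[j,\cdot]}$ in $A$. This character shift is what drives the cycle $P_{i,j}\Rightarrow S_{i,j}\Rightarrow T_{i,j}\Rightarrow P_{1+(i\bmod n),k}$ with $v_k\sim_E v_j$ around all $n$ stages, and the ordering of all $APreLink_{i,j}$ before all $APostLink_{i,j}$ inside $ALinking_j$ is what forbids selecting the same vertex twice (the paper's Proposition on no repeats). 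Without specifying a mechanism of this kind, your ``if'' direction cannot be carried out, and the stage synchronization you describe remains an unproven assertion rather than a consequence of the construction; so as it stands the proposal is an accurate roadmap of the paper's argument but not yet a proof.
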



We show Theorem~\ref{thm:0-OMDCI} by producing a reduction from the complement Hamiltonian-Cycle problem (co-Hamiltonian Cycle problem) to 0-OMDCI.
Before we prove the theorem, we describe the reduction instance in detail as well as show some properties of the reduction instance.
We start by fixing an instance of the co-Hamiltonian Cycle problem, $G=(V,E)$ and choosing arbitrary ordering of the vertices $<v_1,v_2,v_3,\ldots,v_n>$.

\subsection{Reduction Instance}\label{subsec:instance}
We start by describing the set of characters \(\Sigma\) followed by the set of colors for the indices \(\Gamma\). For $(v_i,v_j)\in E$, we also write $v_i \sim_E v_j$.

For every $i \in [1\ldots n]$, we define the following set of characters:
\[ \Sigma^{(P)}_i := \{p_{i,j}: j \in [1 \ldots n]\}, \Sigma^{(S)}_i := \{s_{i,j}: j \in [1 \ldots n]\}, \Sigma^{(T)}_i := \{t_{i,j}: j \in [1 \ldots n]\};\]
and
\[ \Sigma_i = \Sigma^{(P)}_i \cup \Sigma^{(S)}_i \cup \Sigma^{(T)}_i.\]
We then define our set of characters as \(\Sigma = \cup_{i}^{n} \Sigma_{i}\).

Similarly to the sets of characters, we define the following sets of colors.
\[ \Gamma^{(P)}_i := \{P_{i,j}: j \in [1 \ldots n]\}, \Gamma^{(S)}_i := \{S_{i,j}: j \in [1 \ldots n]\},
\Gamma^{(T)}_i := \{T_{i,j}: j \in [1 \ldots n]\},\]
and
\[ \Gamma_i = \Gamma^{(P)}_i \cup \Gamma^{(S)}_i \cup \Gamma^{(T)}_i.\]
Like before, define our set of colors as \(\Gamma = \cup_{i}^{n} \Gamma _{i}\).

The input to the 0-OMDCI instance \(M\) and \(A\) are constructed using two types of gadgets named
\(Selection\) and \(Linking\).
The \(linking\) gadget further has two subgadgets called \(PreLink\) and \(PostLink\).
Every gadget consists of a string with colored indices for both \(A\) and \(M\).
Before moving on to the description of the gadgets, we present the intuition used in their construction and the purpose of the different sets used to construct \(\Sigma\) and \(\Gamma\).

\begin{idea}
    Elements of the sets \(\Sigma^{(P)}_i\) and \(\Gamma^{(P)}_i\) are for \textbf{picking} vertices.
    The colors in \(\Gamma^{(P)}_i\) appear only in the \(i\)th \(Selection\) gadget.
    \label{idea:pickvertices}
\end{idea}

\begin{idea}
    The elements of the sets \(\Sigma^{(P)}_{i}\) and \(\Sigma^{(P)}_{1 + (i \% n )}\) are connected together by the set of colors \(\Gamma^{(T)}_i\) which serves as a mechanism for \textbf{linking} adjacent vertices in the Hamiltonian cycle. Furthermore, through the colors in \(\Gamma^{(T)}_i\) they connect the Linking subgadgets in $M$ and $A$ with selection subgadgets in $M$ and $A$.
\end{idea}

\begin{idea}
    Elements of sets \(\Sigma^{(S)}_i\) and \(\Gamma^{(S)}_i\) are for \textbf{starting} an interval in \(A\).
    Elements of sets \(\Sigma^{(T)}_i\) and \(\Gamma^{(T)}_i\) are for \textbf{terminating} an interval in \(A\). The intervals prevents selecting a vertex more than once.
    The colors in \(\Gamma^{(S)}_i\) appear only in \(PreLink\) subgadgets of \(Link\) gadgets and the colors in \(\Gamma^{(T)}_i\) only appear in \(PostLink\) subgadgets.
\end{idea}

Now define the \(Selection\) gadget which has \(n\)
components.
For every \(i \in [n]\), we define the following strings with colored intervals, with colors appearing at the bottom:
\begin{equation}\label{eq:mselwi}
     MSelection_i := \frac{p_{i,1}}{P_{i,1}}\cdot \frac{p_{i,2}}{P_{i,2}}\cdot
                            \ldots \cdot\frac{p_{i,n}}{P_{i,n}},
\end{equation}

\begin{equation}\label{eq:aselwi}
     ASelection_i := \frac{s_{i,n}}{P_{i,n}}\cdot\frac{s_{i,n-1}}{P_{i,n-1}}\cdot
                            \ldots \cdot\frac{s_{i,1}}{P_{i,1}}.
\end{equation}
The string with colored indices for \(M\) and \(A\) are
\begin{equation}\label{eq:msel}
     MSelection := \prod_{i \in [n]} MSelection_i
\end{equation}
and
\begin{equation}\label{eq:asel}
     ASelection := \prod_{i \in [n]} ASelection_i
\end{equation}
respectively.

There is a \(PreLink\) and \(PostLink\) subgadget for every two
indices $i, j \in [n]$ and a \(Linking\) subgadget for index \(j \in [n]\).
We start by specifying only the \(Prelink\) subgadget.

For $j \in [1\ldots n]$, let $\alpha[j,1], \ldots ,\alpha[j,deg(v_j)]$ be the list of all indices for vertices adjacent to $v_j$ appearing increasing order.
Fixing \(j\), for \(i \in [1..n]\) we define the strings with colored indices for \(M\) in the \(PreLink\) subgadget as
\begin{equation}\label{eq:mprelink-with-ij-upto-n-1}
     MPreLink_{i,j} = \frac{s_{i,j}}{S_{i,j}}.
\end{equation}
For the string with colored indices for \(A\) in the \(PreLink\) subgadget, we similarly define for \(i \in [n]\):
\begin{equation}\label{eq:aprelink-with-ij-upto-n-1}
      APreLink_{i,j} = \frac{t_{i,j}}{S_{i,j}}.
\end{equation}

Proceeding to the \(PostLink\) subgadget,
For all \(i,j \in [n] \), we define its subgadgets for the \(M\) string as
\begin{equation}\label{eq:mpostlink-with-ij}
      MPostLink_{i,j} =  \frac{t_{i,j}}{T_{i,j}}.
\end{equation}
For any \(j \in [n]\), we define the \(PostLink\) subgadgets for the \(A\)
string as follows:

\begin{equation}\label{eq:apostlink-with-ij}
    APostLink_{i,j} = \frac{p_{1 + (i\% n), \alpha[j,1]}}{T_{i,j}} \cdot\frac{p_{1 + (i\% n), \alpha[j,2]}}{T_{i,j}} \cdots \frac{p_{1 + (i\% n), \alpha[j,deg(v_j)]}}{T_{i,j}}.
\end{equation}


To complete the construction of \(Linking\) gadgets for every \(j \in [n]\), we use \(PreLink_{i,j}\) and \(PostLink_{i,j}\) subgadgets where \(i\in[n]\) as follows: 
\begin{equation}\label{eq:mlinking-with-j}
       MLinking_j =  \prod_{i \in [n]}( MPreLink_{i,j} \circ MPostLink_{i,j})
\end{equation}
and
\begin{equation}\label{eq:alinking-with-j}
      ALinking_j = \left( \prod_{i \in [n]} APreLink_{i,j} \right)
    \circ   \left(\prod_{i \in [n]} APostLink_{i,j}\right)
\end{equation}
for the \(M\) and \(A\) strings respectively.
We construction the complete string with colored for indices for the \(M\) and \(A\) in gadget as:
\begin{equation}\label{eq:mlinking}
      MLinking = \prod_{j \in [n]} MLinking_{j},
\end{equation}
\begin{equation}\label{eq:alinking}
      ALinking = \prod_{j \in [n]} ALinking_{j}.
\end{equation}

The complete reduction instance for \(G\) now may be computed by concatenating
our two gadgets together with
\begin{equation}\label{eq:m-construction}
    M = MSelection \circ MLinking
\end{equation}
and
\begin{equation}\label{eq:a-construction}
    A = ASelection \circ ALinking.
\end{equation}
A flow of the reduction can be found in Figure~\ref{fig1}.
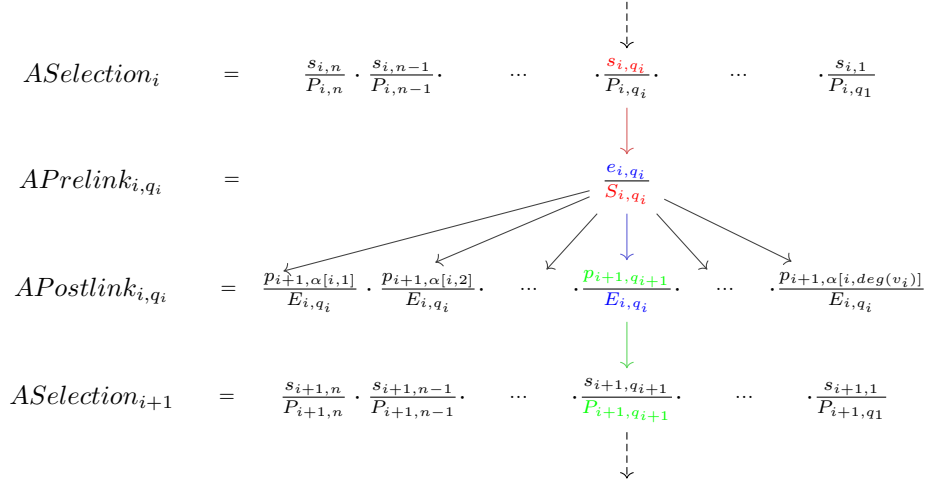
\begin{figure}
    \centering
    \[\begin{tikzcd}
	&& {} \\
	{ASelection_i} & {\frac{s_{i,n}}{P_{i,n}}\cdot \frac{s_{i,n-1}}{P_{i,n-1}}\cdot} & {\cdot\frac{\color{red}s_{i,q_{i}}}{P_{i,q_{i}}}\cdot} & {\cdot\frac{s_{i,1}}{P_{i,q_1}}} \\
	{APrelink_{i,q_i}} & {} & {\frac{\color{blue}e_{i,q_i}}{\color{red}S_{i,q_i}}} & {} \\
	{APostlink_{i,q_i}} & {\frac{p_{i+1,\alpha[i,1]}}{E_{i,q_i}}\cdot\frac{p_{i+1,\alpha[i,2]}}{E_{i,q_i}}\cdot} & {\cdot\frac{\color{green}p_{i+1,q_{i+1}}}{\color{blue}E_{i,q_i}}\cdot} & {\cdot\frac{p_{i+1,\alpha[i,deg(v_i)]}}{E_{i,q_i}}} \\
	{ASelection_{i+1}} & {\frac{s_{i+1,n}}{P_{i+1,n}}\cdot \frac{s_{i+1,n-1}}{P_{i+1,n-1}}\cdot} & {\cdot\frac{s_{i+1,q_{i+1}}}{\color{green}P_{i+1,q_{i+1}}}\cdot} & {\cdot\frac{s_{i+1,1}}{P_{i+1,q_1}}} \\
	&& {}
	\arrow[dashed, from=1-3, to=2-3]
	\arrow["{=}"{description}, draw=none, from=2-1, to=2-2]
	\arrow["\ldots"{description}, draw=none, from=2-2, to=2-3]
	\arrow["\ldots"{description}, draw=none, from=2-3, to=2-4]
	\arrow[color={rgb,255:red,214;green,92;blue,92}, from=2-3, to=3-3]
	\arrow["{=}"{description, pos=0.3}, draw=none, from=3-1, to=3-2]
	\arrow[color={rgb,255:red,64;green,64;blue,64},  from=3-3, to=4-2]
	\arrow[color={rgb,255:red,92;green,92;blue,214}, from=3-3, to=4-3]
	\arrow[color={rgb,255:red,64;green,64;blue,64},  from=3-3, to=4-4]
	\arrow[""{name=0, anchor=center, inner sep=0}, "{=}"{description, pos=0.7}, draw=none, from=4-1, to=4-2]
	\arrow[""{name=1, anchor=center, inner sep=0}, "\ldots"{description}, draw=none, from=4-2, to=4-3]
	\arrow[""{name=2, anchor=center, inner sep=0}, "\ldots"{description}, draw=none, from=4-3, to=4-4]
	\arrow[color={rgb,255:red,92;green,214;blue,92}, from=4-3, to=5-3]
	\arrow["{=}"{description}, draw=none, from=5-1, to=5-2]
	\arrow["\ldots"{description}, draw=none, from=5-2, to=5-3]
	\arrow["\ldots"{description}, draw=none, from=5-3, to=5-4]
	\arrow[dashed, from=5-3, to=6-3]
	\arrow[color={rgb,255:red,64;green,64;blue,64}, shorten >=26pt, from=3-3, to=0]
	\arrow[color={rgb,255:red,64;green,64;blue,64}, shorten >=9pt, from=3-3, to=1]
	\arrow[color={rgb,255:red,64;green,64;blue,64}, shorten >=9pt, from=3-3, to=2]
\end{tikzcd}\]
    \caption{An illustration of the flow of the reduction.}
    \label{fig1}
\end{figure}

\subsection{Properties of reduction instance}
In this section, we show properties of $M$ and $A$ along with the properties of any OMDCI solution between $M$ and $A$. Let \(\beta:\Gamma \cup \Sigma' \rightarrow [1...n]\) map each color and character to the second index of their subscript; e.g., \(\beta(p_{i,j}) = j\) and \(\beta(P_{i+1,\alpha[j,1]}) = \alpha[j,1]\). 


\subsubsection{Properties of $A$ and $M$}
\begin{observation}
\(M\) is a permutation string and each color and critical character in $M$ appears exactly once.
\label{obs:eachcolorappearsonceinM}
\end{observation}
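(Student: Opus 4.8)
The plan is to unfold the definition of $M$ one layer at a time and, for each constituent block, record the single character and the single color it carries. By \eqref{eq:m-construction}, $M = MSelection \circ MLinking$, so I would treat the two halves separately and then check that they share neither a character nor a color.

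First I would examine $MSelection$. By \eqref{eq:msel} and \eqref{eq:mselwi} it equals $\prod_{i\in[n]} MSelection_i$, where $MSelection_i$ is the length-$n$ string with colored indices whose characters are $p_{i,1},\dots,p_{i,n}$ and whose colors are $P_{i,1},\dots,P_{i,n}$. Letting $i$ range over $[n]$ shows that $MSelection$ contains every character of $\bigcup_{i\in[n]}\Sigma^{(P)}_i = \{p_{i,j}:i,j\in[n]\}$ exactly once, and every color of $\bigcup_{i\in[n]}\Gamma^{(P)}_i$ exactly once; in particular $|MSelection| = n^2$.

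Next I would examine $MLinking$. By \eqref{eq:mlinking}, \eqref{eq:mlinking-with-j}, \eqref{eq:mprelink-with-ij-upto-n-1} and \eqref{eq:mpostlink-with-ij}, it is obtained by concatenating, over all $i,j\in[n]$, the two single blocks $\frac{s_{i,j}}{S_{i,j}}$ and $\frac{t_{i,j}}{T_{i,j}}$ --- note the key point that on the $M$ side the index $i$ ranges over all of $[n]$, unlike the degree-restricted products used on the $A$ side. Hence $MLinking$ contains every character of $\bigcup_{i\in[n]}(\Sigma^{(S)}_i\cup\Sigma^{(T)}_i)$ exactly once and every color of $\bigcup_{i\in[n]}(\Gamma^{(S)}_i\cup\Gamma^{(T)}_i)$ exactly once; in particular $|MLinking| = 2n^2$.

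Finally I would glue the two halves. Since the families $\Sigma^{(P)}_i, \Sigma^{(S)}_i, \Sigma^{(T)}_i$ use the pairwise distinct letter symbols $p, s, t$ and, within a fixed letter, distinct subscript pairs, the alphabet $\Sigma = \bigcup_{i\in[n]}\Sigma_i$ is the disjoint union of the character set contributed by $MSelection$ and the one contributed by $MLinking$; the identical argument applies to $\Gamma$. Therefore $\sigma_M$ is a bijection $[3n^2]\to\Sigma$ and $\gamma_M$ is a bijection $[3n^2]\to\Gamma$, so $M$ is a permutation string in both coordinates and each color and each critical character of $M$ occurs exactly once. I do not anticipate a genuine obstacle here: the argument is pure bookkeeping, and the only spots that need care are reading the nested products $\prod$ in the stated order so that the lengths add to $n^2 + 2n^2 = 3n^2 = |\Sigma| = |\Gamma|$, and noting that no $s$- or $t$-block is dropped or repeated precisely because the linking products on the $M$ side run over the full index set $[n]\times[n]$.
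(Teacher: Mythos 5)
Your proposal is correct, and it is exactly the bookkeeping the paper relies on: the paper states this as an Observation with no written proof, since it follows immediately from the construction of $MSelection$ (each $p_{i,j}/P_{i,j}$ once) and $MLinking$ (each $s_{i,j}/S_{i,j}$ and $t_{i,j}/T_{i,j}$ once over all $i,j\in[n]$), with the three letter families disjoint. Your count $n^2+2n^2=3n^2=|\Sigma|=|\Gamma|$ makes that implicit argument explicit; there is nothing to add.
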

\begin{observation}
    If \( c \in \Gamma \setminus \cup_{i \in [n]} \Gamma^{(T)}_i\), then exactly one index is colored \(c\) in \(A\).
    \label{obs:exactlyonecolorexceptedges}
\end{observation}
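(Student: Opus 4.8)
The plan is to unwind the definition of $\Gamma$ and then track, color class by color class, exactly which gadgets of $A$ can carry a given color. First I would observe that $\Gamma \setminus \bigcup_{i\in[n]}\Gamma^{(T)}_i = \bigcup_{i\in[n]}\bigl(\Gamma^{(P)}_i \cup \Gamma^{(S)}_i\bigr)$, so it suffices to treat the two cases $c = P_{i,j}$ and $c = S_{i,j}$ separately, and in each case to locate the unique position of $A$ colored $c$.

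For $c = P_{i,j}$: since $A = ASelection \circ ALinking$ by \eqref{eq:a-construction}, and every colored index occurring in $ALinking$ lies in some $APreLink_{k,\ell}$ (color $S_{k,\ell}$) or some $APostLink_{k,\ell}$ (color $T_{k,\ell}$), no $P$-color appears in $ALinking$ at all. Inside $ASelection = \prod_{k\in[n]} ASelection_k$, the block $ASelection_k$ of \eqref{eq:aselwi} uses only colors of the form $P_{k,\cdot}$, so $P_{i,j}$ can occur only in $ASelection_i$; and there, reading \eqref{eq:aselwi} directly, the colors $P_{i,n}, P_{i,n-1}, \ldots, P_{i,1}$ each occur exactly once. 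Hence $P_{i,j}$ occurs exactly once in $A$.

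For $c = S_{i,j}$: symmetrically, $ASelection$ uses only $P$-colors, so $S_{i,j}$ can occur only in $ALinking = \prod_{\ell\in[n]} ALinking_\ell$ from \eqref{eq:alinking}. By \eqref{eq:alinking-with-j}, $ALinking_\ell$ is built from the strings $APreLink_{k,\ell}$ and $APostLink_{k,\ell}$ over $k\in[n]$, which carry colors $S_{k,\ell}$ and $T_{k,\ell}$ respectively; thus the second subscript of every $S$-color occurring in $ALinking_\ell$ equals $\ell$, forcing $S_{i,j}$ to appear only within $ALinking_j$. Within $ALinking_j$ the only occurrences of $S$-colors are in the single-index strings $APreLink_{k,j} = \frac{t_{k,j}}{S_{k,j}}$, one for each $k\in[n]$, so $S_{i,j}$ appears exactly once. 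This settles both cases.

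There is no real obstacle here: the statement is a direct bookkeeping consequence of the construction, so the only thing to be disciplined about is the indexing — for each color class, correctly identifying the single gadget (and the single position within it) that can carry the color. It is worth noting in passing the contrast that motivates the exclusion of $\Gamma^{(T)}_i$: by \eqref{eq:apostlink-with-ij}, the string $APostLink_{i,j}$ repeats the color $T_{i,j}$ a total of $\deg(v_j)$ times, so $T$-colors generally appear more than once in $A$, and this multiplicity is exactly what the following arguments will exploit.
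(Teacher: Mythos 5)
Your proof is correct and follows exactly the bookkeeping the paper intends: the paper states this observation without proof precisely because, as you show, the $P$-colors occur only in the $ASelection_i$ blocks (once each) and the $S$-colors only in the single-index $APreLink_{i,j}$ strings (once each), with only the excluded $T$-colors repeated $\deg(v_j)$ times in $APostLink_{i,j}$. Nothing to add.
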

\begin{observation}
    If \(c \in \Gamma^{(T)}_i\) and \(\beta(c) = i\)
    then the number of indices in \(A\) 
    colored \(c\) is \(\deg(v_i)\)
    and all the indices colored \(c\) appear in one \(APostLink_{i,j}\) subgadget.
    \label{obs:endsoccurdegeachunique}
\end{observation}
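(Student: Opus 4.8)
The plan is to prove the statement by directly tracing, through the construction in Section~\ref{subsec:instance}, every occurrence of a $T$-color in the string $A$. Write $c = T_{i,j}$, so that the membership $c \in \Gamma^{(T)}_i$ fixes the first subscript to be $i$ while $\beta(c) = j$ is the second subscript; the hypothesis $\beta(c) = i$ then forces $j = i$. The argument splits into two independent claims matching the two conjuncts of the observation: a \emph{location} claim (all occurrences of $c$ lie in a single $APostLink$ subgadget) and a \emph{count} claim (there are exactly $\deg(v_i)$ of them).

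For the location claim, I would first appeal to the design principle already recorded in the text, namely that colors in $\Gamma^{(T)}_i$ occur only inside $PostLink$ subgadgets. To make this rigorous I would inspect the three kinds of $A$-components exhaustively: each $ASelection_i$ in \eqref{eq:aselwi} uses only the colors $P_{i,1},\dots,P_{i,n}$; each $APreLink_{i,j}$ in \eqref{eq:aprelink-with-ij-upto-n-1} uses only the color $S_{i,j}$; and each $APostLink_{i,j}$ in \eqref{eq:apostlink-with-ij} uses the single color $T_{i,j}$ on every one of its indices. Since $A = ASelection \circ ALinking$ and $ALinking$ is the concatenation of the $ALinking_j$ from \eqref{eq:alinking-with-j}, these components account for all indices of $A$. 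Hence a $T$-color can appear only in $APostLink$ subgadgets, and an index in $APostLink_{i',j'}$ carries $c$ exactly when $T_{i',j'} = c$, i.e. when $(i',j') = (i,j)$. Because distinct pairs $(i',j')$ label $APostLink$ subgadgets with distinct $T$-colors, all occurrences of $c$ are confined to the single subgadget $APostLink_{i,j} = APostLink_{i,i}$, which establishes the location claim.

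For the count claim, I would read off the length of $APostLink_{i,j}$ directly from \eqref{eq:apostlink-with-ij}: its indices are enumerated by $\alpha[j,1],\dots,\alpha[j,\deg(v_j)]$, the increasing list of neighbors of $v_j$, so the subgadget has exactly $\deg(v_j)$ indices, every one of them colored $T_{i,j} = c$. Thus the number of indices of $A$ colored $c$ equals $\deg(v_j)$, and under the hypothesis $\beta(c) = i$ (that is, $j = i$) this is $\deg(v_i)$, as required. Combining the two claims yields the observation.

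I expect no deep obstacle: the statement is a bookkeeping fact about the construction, and the only point requiring care is the indexing. The one thing worth stating cleanly is the reconciliation between the hypothesis $\beta(c) = i$ and the natural count: in general a $T$-color $T_{i,j}$ occurs $\deg(v_{\beta(c)}) = \deg(v_j)$ times, and it is precisely the assumption $\beta(c) = i$ that rewrites this as $\deg(v_i)$. I would therefore be explicit that the enumeration $\alpha[j,\cdot]$ has length $\deg(v_j)$ by its very definition, since that single fact is what drives the count.
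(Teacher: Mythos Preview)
Your proposal is correct. The paper states this as an observation without proof, treating it as immediate from the construction; your argument is exactly the direct trace through \eqref{eq:aselwi}, \eqref{eq:aprelink-with-ij-upto-n-1}, and \eqref{eq:apostlink-with-ij} that one would write to justify it, and your care in reconciling the hypothesis $\beta(c)=i$ with the natural count $\deg(v_{\beta(c)})$ is apt given the somewhat awkward indexing in the statement.
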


\begin{observation}
    If \(\delta \in \Sigma \setminus \cup_{i \in [n]}\Sigma_i^{(P)}\), then \(\delta\) appears exactly once in \(A\).
    \label{obs:exactlyoneciexceptpicksci}
\end{observation}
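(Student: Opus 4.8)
The plan is to reduce the claim to a short, finite inspection of the gadget definitions in Section~\ref{subsec:instance}, exploiting the fact that the complement $\Sigma \setminus \cup_{i \in [n]}\Sigma_i^{(P)}$ is exactly $\cup_{i \in [n]}\bigl(\Sigma_i^{(S)} \cup \Sigma_i^{(T)}\bigr)$. Hence every eligible $\delta$ is either a starting character $s_{i,j}$ or a terminating character $t_{i,j}$ for some $i,j \in [n]$, and I would treat these two types separately. The key preliminary step is to catalog, family by family, which characters occur in each building block of $A = ASelection \circ ALinking$: by Equation~\eqref{eq:aselwi} the block $ASelection_i$ lists exactly the characters $s_{i,n}, s_{i,n-1}, \ldots, s_{i,1}$, each once and nothing else; by Equation~\eqref{eq:aprelink-with-ij-upto-n-1} the subgadget $APreLink_{i,j}$ contributes the single character $t_{i,j}$; and by Equation~\eqref{eq:apostlink-with-ij} the subgadget $APostLink_{i,j}$ contributes only picking characters (all with first index $1+(i\% n)$).

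For the case $\delta = s_{i,j}$, I would argue that it occurs exactly once, namely in $ASelection_i$, and nowhere else: it is absent from every $ASelection_{i'}$ with $i' \neq i$, which list only characters with first index $i'$, and it is absent from all of $ALinking$, since its $PreLink$ part carries only $t$-characters and its $PostLink$ part only $p$-characters. For the case $\delta = t_{i,j}$, I would argue symmetrically that it occurs exactly once, as the character of $APreLink_{i,j}$: it does not appear in $ASelection$ (built solely from $s$-characters), nor in any $APostLink$ (built solely from $p$-characters), nor in any $APreLink_{i',j'}$ with $(i',j') \neq (i,j)$, since that subgadget carries the distinct character $t_{i',j'}$. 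In both cases existence (at least one occurrence) and uniqueness (at most one) follow directly from the displayed definitions.

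The only point requiring genuine care — and essentially the sole way the argument could go wrong if read too quickly — is the systematic separation of colors from characters. The symbol $S_{i,j}$ appears in $A$ only as the color of $APreLink_{i,j}$, never as a character, and likewise $T_{i,j}$ appears in $APostLink_{i,j}$ only as a color; so the presence of these capitalized color symbols must not be mistaken for occurrences of the lowercase characters $s_{i,j}$ or $t_{i,j}$. Once this bookkeeping is made explicit, the counts are immediate, and I expect no substantive obstacle, since the observation is a direct structural consequence of the construction.
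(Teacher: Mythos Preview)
Your proposal is correct: the observation is established precisely by the case split $\delta = s_{i,j}$ versus $\delta = t_{i,j}$ and the direct inspection of equations~\eqref{eq:aselwi}, \eqref{eq:aprelink-with-ij-upto-n-1}, and \eqref{eq:apostlink-with-ij} that you outline. The paper itself offers no proof --- the statement is recorded as a bare observation --- so your argument is exactly the verification one would carry out, and your remark about not conflating the lowercase characters with the uppercase colors is the only point where a careless reader might slip.
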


\subsubsection{Properties of OMDCI solutions between $A$ and $M$}
Suppose there exists a solution of size \( k > 0 \) for the input pair \((A, M)\). Using the same notation as in Definition \ref{def:gomdic} and the preliminaries, let \( M' \) and \( A' \) denote the common subsequences of colors. Let \( I_{M'} = \{a_1, a_2, \ldots, a_k\} \) be the strictly increasing indices in \(\gamma_M\) corresponding to \( M' \), and \( I_{A'} = \{b_1, b_2, \ldots, b_k\} \) be the strictly increasing indices in \(\gamma_A\) corresponding to \( A' \). 

\begin{proposition}
For any \( l \in [k] \), the color \( \gamma_M(a_l) \) appears at some index in \( MSelection_i \) if and only if the color \( \gamma_A(b_l) \) appears at some index in \( ASelection_i \).
\label{prop:mselectmatchaselect}
\end{proposition}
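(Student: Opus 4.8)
The plan is to prove Proposition~\ref{prop:mselectmatchaselect} by tracking the colors carefully through the structure of $M$ and $A$. The key observation is that the color alphabet $\Gamma$ is partitioned into three kinds of colors: the $P$-colors $\Gamma^{(P)}_i$, the $S$-colors $\Gamma^{(S)}_i$, and the $T$-colors $\Gamma^{(T)}_i$. By construction (equations~\eqref{eq:mselwi}--\eqref{eq:msel} and \eqref{eq:mprelink-with-ij-upto-n-1}, \eqref{eq:mpostlink-with-ij}, \eqref{eq:mlinking-with-j}, \eqref{eq:mlinking}), every index in $MSelection$ carries a $P$-color, every index in $MPreLink$ carries an $S$-color, and every index in $MPostLink$ carries a $T$-color; and since $M = MSelection \circ MLinking$, the set of indices of $M$ bearing a $P$-color is \emph{exactly} the set of indices lying inside $MSelection$. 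Moreover, the $P$-colors appearing in $MSelection_i$ are precisely $\Gamma^{(P)}_i = \{P_{i,1},\ldots,P_{i,n}\}$, and these are disjoint for distinct $i$ (Idea~\ref{idea:pickvertices}). The analogous statement holds on the $A$ side: by \eqref{eq:aselwi}, \eqref{eq:asel} the indices of $ASelection_i$ carry exactly the colors $\Gamma^{(P)}_i$, while every index of $ALinking$ carries an $S$- or $T$-color (equations~\eqref{eq:aprelink-with-ij-upto-n-1}, \eqref{eq:apostlink-with-ij}, \eqref{eq:alinking-with-j}, \eqref{eq:alinking}). So on \emph{both} sides, ``the color belongs to $\Gamma^{(P)}_i$'' is equivalent to ``the index lies in the $i$-th Selection block.''

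With this setup the proposition is almost immediate. First I would argue the ``only if'' direction: suppose $\gamma_M(a_l)$ appears at an index of $MSelection_i$. Then $\gamma_M(a_l) \in \Gamma^{(P)}_i$. By Property~\ref{property:colorsaremakeacommonsubsequence} of Definition~\ref{def:gomdic}, $\gamma_A(b_l) = \gamma_M(a_l) \in \Gamma^{(P)}_i$. An index of $A$ colored by something in $\Gamma^{(P)}_i$ must lie in $ASelection$ (since all $ALinking$ indices carry $S$- or $T$-colors), and within $ASelection$ it must lie in the block $ASelection_i$ (since the $P$-color families are disjoint across $i$). The ``if'' direction is completely symmetric, exchanging the roles of $M$ and $A$ and of $MSelection_i$ and $ASelection_i$.

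I would present this as a short chain of ``$\iff$'' steps, citing Observations~\ref{obs:eachcolorappearsonceinM} and \ref{obs:exactlyonecolorexceptedges} where needed (the latter guarantees that a $P$-color determines its index in $A$ uniquely, though for the proposition we only need that the index lies in the right Selection block, not that it is unique). The role of the $\beta$ map introduced at the start of the subsection is worth noting: $\beta$ recovers the second subscript, but here the relevant bookkeeping is on the \emph{first} subscript $i$, which is exactly the Selection-block index; I would state explicitly that membership of a $P$-color in $\Gamma^{(P)}_i$ is detected by its first subscript.

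I do not expect a serious obstacle here; this is a structural bookkeeping lemma. The only thing to be careful about is making sure the case analysis on color types is exhaustive and that the partition claims ($P$-colors occur only in Selection gadgets, $S$-colors only in PreLink, $T$-colors only in PostLink, on \emph{both} the $M$ and $A$ sides) are each justified by pointing to the defining equation rather than asserted. If there is any subtlety, it is purely in confirming from the construction that no $P$-color accidentally leaks into a Linking gadget on either side — but inspection of \eqref{eq:mprelink-with-ij-upto-n-1}, \eqref{eq:aprelink-with-ij-upto-n-1}, \eqref{eq:mpostlink-with-ij}, \eqref{eq:apostlink-with-ij} shows the colors there are drawn only from $\Gamma^{(S)}$ and $\Gamma^{(T)}$, so this is safe.
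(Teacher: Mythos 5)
Your proposal is correct and follows essentially the same route as the paper's own (very terse) proof: both rest on the fact that the colors $\Gamma^{(P)}_i$ occur exclusively in $MSelection_i$ and $ASelection_i$ by construction, combined with the color-equality requirement $\gamma_M(a_l)=\gamma_A(b_l)$ and the uniqueness facts in Observations~\ref{obs:eachcolorappearsonceinM} and \ref{obs:exactlyonecolorexceptedges}. You simply spell out the bookkeeping (the $P$/$S$/$T$ partition and the first-subscript argument) that the paper leaves implicit, which is fine.
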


\begin{proof}
This proposition essentially states that if a color is picked in some \(MSelection_i\), then its corresponding colored index must only exist in \(ASelection_i\) and vice versa.

Immediately following Observations \ref{obs:eachcolorappearsonceinM} and \ref{obs:exactlyonecolorexceptedges}, any color that is matched in either $MSelection_i$ or $ASelection_i$ must be matched solely within their respective gadgets.
\end{proof}
\begin{proposition}
    For any \(l \in [k]\), the color \(\gamma_M(a_l)\) appears at some index in \(MPreLink_{i,j}\) if and only if the color \(\gamma_A(b_l)\) appears at some index in \(APreLink_{i,j}\).
    \label{prop:mprelink-matches-aprelink}
\end{proposition}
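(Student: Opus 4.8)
The plan is to mimic the proof of Proposition~\ref{prop:mselectmatchaselect}, using the observations about uniqueness of colors in $A$ together with the fact that each color in $M$ appears exactly once. Concretely, fix $l\in[k]$ and suppose $\gamma_M(a_l)$ appears at some index in $MPreLink_{i,j}$. By Equation~\eqref{eq:mprelink-with-ij-upto-n-1}, that color is $S_{i,j}\in\Gamma^{(S)}_i$. The key structural fact (to be extracted from the construction and recorded, if necessary, as a small auxiliary observation) is that colors from $\Gamma^{(S)}_i$ occur in $A$ \emph{only} inside $PreLink$ subgadgets, and in fact only inside $APreLink_{i,j}$: indeed $ASelection$ uses only $\Gamma^{(P)}$-colors (Equations~\eqref{eq:aselwi},\eqref{eq:asel}), $APostLink$ uses only $\Gamma^{(T)}$-colors (Equation~\eqref{eq:apostlink-with-ij}), and $APreLink_{i',j'}$ uses the single color $S_{i',j'}$ (Equation~\eqref{eq:aprelink-with-ij-upto-n-1}), so the only index in $A$ colored $S_{i,j}$ is the unique index in $APreLink_{i,j}$. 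Since $\gamma_M(a_l)=\gamma_A(b_l)$ by Property~\ref{property:colorsaremakeacommonsubsequence}, the index $b_l$ must be that index, which lies in $APreLink_{i,j}$, proving one direction.

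For the converse direction, suppose $\gamma_A(b_l)$ appears at some index in $APreLink_{i,j}$; by Equation~\eqref{eq:aprelink-with-ij-upto-n-1} this color is $S_{i,j}$. By Observation~\ref{obs:eachcolorappearsonceinM}, the color $S_{i,j}$ appears exactly once in $M$, and by inspection of the construction the unique occurrence of $S_{i,j}$ in $M$ is in $MPreLink_{i,j}$ (Equation~\eqref{eq:mprelink-with-ij-upto-n-1}); no $\Gamma^{(S)}$-color occurs in $MSelection$ or in any $MPostLink$ subgadget. Again using $\gamma_M(a_l)=\gamma_A(b_l)$, we conclude $a_l$ is that unique index, which lies in $MPreLink_{i,j}$, as required.

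I expect the proof itself to be short and almost purely bookkeeping; the main (minor) obstacle is that, unlike Proposition~\ref{prop:mselectmatchaselect}, there is no ready-made observation stating ``colors of $\Gamma^{(S)}_i$ appear in $A$ only in $APreLink_{i,j}$,'' so I would either invoke the third \emph{Idea} (which already asserts that $\Gamma^{(S)}_i$-colors appear only in $PreLink$ subgadgets) or insert a one-line observation to that effect and then cite it, mirroring how Observation~\ref{obs:exactlyonecolorexceptedges} is used in the previous proposition. Care must be taken that the statement pins down \emph{both} indices $i$ and $j$ simultaneously on the two sides, which follows because a single color $S_{i,j}$ determines the pair $(i,j)$; the uniqueness of this color on each side (in $A$ via the structural fact above, in $M$ via Observation~\ref{obs:eachcolorappearsonceinM}) is what makes the ``if and only if'' go through cleanly.
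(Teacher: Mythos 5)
Your proof is correct and follows essentially the same route as the paper: the paper likewise argues that, by Observation~\ref{obs:eachcolorappearsonceinM} and Observation~\ref{obs:exactlyonecolorexceptedges}, the color $S_{i,j}$ occurs exactly once in $M$ (in $MPreLink_{i,j}$) and exactly once in $A$ (in $APreLink_{i,j}$), so any matched occurrence must pair those gadgets. You merely spell out the gadget-by-gadget verification that the paper leaves implicit, which is fine.
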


\begin{proof}
    Similar to Proposition \ref{prop:mselectmatchaselect}, this proposition states that if a color is picked in some \(MPreLink_{i,j}\), then its corresponding colored index must only exist in \(APreLink_{i,j}\), and vice versa.
    
    The proof is straightforward. From Observations \ref{obs:eachcolorappearsonceinM} and \ref{obs:exactlyonecolorexceptedges}, any color matched in either $MPreLink_{i,j}$ or $APreLink_{i,j}$ must be matched solely within their respective gadgets.
\end{proof}

\begin{proposition}
    For any \(l \in [k]\), the color \(\gamma_M(a_l)\) appears at some index in \(MPostLink_{i,j}\) if and only if the color \(\gamma_A(b_l)\) appears at some index in \(APostLink_{i,j}\).
    \label{prop:mpostlink-matches-mpostlink}
\end{proposition}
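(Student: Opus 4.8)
The plan is to mimic the two preceding propositions, but with an extra wrinkle: the colors in $\Gamma^{(T)}_i$ are \emph{not} unique in $A$ (by Observation~\ref{obs:endsoccurdegeachunique}, each such color occurs $\deg(v_i)$ times), so the "solely within their respective gadgets" argument of Propositions~\ref{prop:mselectmatchaselect} and \ref{prop:mprelink-matches-aprelink} must be handled more carefully. First I would record the easy direction of the color-location bookkeeping: in $M$, by Observation~\ref{obs:eachcolorappearsonceinM}, the color $T_{i,j}$ appears exactly once, and that occurrence is in $MPostLink_{i,j}$ (this is immediate from the construction in Equations~\eqref{eq:mpostlink-with-ij} and \eqref{eq:mlinking-with-j}--\eqref{eq:mlinking}, since $MLinking$ is the only place $T$-colors appear and $MPostLink_{i,j}$ is the unique subgadget using $T_{i,j}$). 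So if $\gamma_M(a_l)$ appears at an index in $MPostLink_{i,j}$, then $\gamma_M(a_l) = T_{i,j}$.

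Next I would argue the forward direction. If $\gamma_M(a_l)$ lies in $MPostLink_{i,j}$ then $\gamma_M(a_l)=T_{i,j}\in\Gamma^{(T)}_i$, and by the color-matching requirement (Property~\ref{property:colorsaremakeacommonsubsequence} of Definition~\ref{def:gomdic}) we need $\gamma_A(b_l)=T_{i,j}$ as well. Now I invoke Observation~\ref{obs:endsoccurdegeachunique}: every index in $A$ colored $T_{i,j}$ lies inside a single $APostLink_{i,j}$ subgadget (the one indexed by exactly this pair $(i,j)$, since $T_{i,j}$ determines both subscripts). Hence $\gamma_A(b_l)$ appears at an index in $APostLink_{i,j}$, as desired. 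For the reverse direction, suppose $\gamma_A(b_l)$ lies in $APostLink_{i,j}$; by the construction in Equation~\eqref{eq:apostlink-with-ij} every index of that subgadget is colored $T_{i,j}$, so $\gamma_A(b_l)=T_{i,j}$, and therefore $\gamma_M(a_l)=T_{i,j}$, whose unique occurrence in $M$ is in $MPostLink_{i,j}$.

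The main obstacle, and the reason this proposition is not quite as trivial as its two predecessors, is precisely that $T$-colors are multiply occurring in $A$: one must check that all $\deg(v_i)$ copies of a given $T_{i,j}$ are confined to one $APostLink_{i,j}$ rather than being spread across several $PostLink$ subgadgets or leaking into $APreLink$ or $ASelection$. This is exactly what Observation~\ref{obs:endsoccurdegeachunique} supplies, together with the fact that the second subscript $j$ of $T_{i,j}$ is pinned down (the $T$-colors carry both indices $i$ and $j$), so there is no ambiguity about \emph{which} $PostLink$ subgadget the matched index sits in. I would state this dependence explicitly so the reader sees where Observation~\ref{obs:endsoccurdegeachunique} is used, and otherwise keep the write-up parallel to the two preceding proofs.
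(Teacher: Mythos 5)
Your proposal is correct and follows essentially the same route as the paper's own proof, which likewise rests on Observations~\ref{obs:eachcolorappearsonceinM} and \ref{obs:endsoccurdegeachunique} to conclude that a matched $T_{i,j}$ color is confined to $MPostLink_{i,j}$ and to the single $APostLink_{i,j}$ subgadget. You merely spell out the two directions and the multiplicity of $T$-colors in $A$ more explicitly than the paper does, which is a faithful elaboration rather than a different argument.
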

\begin{proof}
    The argument is similar to the previous two proofs. Immediately following Observations \ref{obs:eachcolorappearsonceinM} and \ref{obs:endsoccurdegeachunique}, any color that is matched in either $MPostLink_{i,j}$ or $APostLink_{i,j}$ must be matched solely within their respective gadgets.
\end{proof}

\subsubsection{Properties from $Mselection_i$ and $ASelection_i$}

\begin{proposition}\label{prop:one-per-selection}
For all $i \in [n]$, any OMDCI solution can only include at most one colored index from $MSelection_i$ and $ASelection_i$ selection subgadgets.
\end{proposition}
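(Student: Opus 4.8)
The plan is to exploit the orientation mismatch between the two gadgets: $MSelection_i$ lists its colors $P_{i,1},P_{i,2},\ldots,P_{i,n}$ in increasing order of the second subscript, whereas $ASelection_i$ lists the same colors $P_{i,n},P_{i,n-1},\ldots,P_{i,1}$ in the \emph{reverse} order. Since a solution is in particular a common subsequence of $\gamma_M$ and $\gamma_A$ --- with $I_{M'}$ and $I_{A'}$ both strictly increasing and $\gamma_M(a_l)=\gamma_A(b_l)$ for every $l$ --- any two colors the solution takes from this pair of gadgets would have to occur in the same relative order on the $M$-side and on the $A$-side, which is impossible once there are two of them.

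Concretely, first I would invoke Proposition~\ref{prop:mselectmatchaselect}: the set $L$ of positions $l\in[k]$ for which $\gamma_M(a_l)$ lies in $MSelection_i$ coincides with the set of positions for which $\gamma_A(b_l)$ lies in $ASelection_i$. For $l\in L$ write $\gamma_M(a_l)=\gamma_A(b_l)=P_{i,c_l}$. By Observations~\ref{obs:eachcolorappearsonceinM} and \ref{obs:exactlyonecolorexceptedges}, each color $P_{i,c}$ occurs exactly once in $M$ and exactly once in $A$; hence the $c_l$ with $l\in L$ are pairwise distinct (strict monotonicity of $I_{M'}$ forbids repeats), the unique occurrence of $P_{i,c_l}$ in $M$ is the $c_l$-th index of $MSelection_i$, and its unique occurrence in $A$ is the $(n-c_l+1)$-th index of $ASelection_i$.

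Now suppose $|L|\ge 2$ and pick $l,l'\in L$ with $l<l'$. Strict monotonicity of $I_{M'}$ gives $a_l<a_{l'}$, so the $M$-occurrence of $P_{i,c_l}$ precedes that of $P_{i,c_{l'}}$ inside $MSelection_i$, i.e. $c_l<c_{l'}$. Strict monotonicity of $I_{A'}$ gives $b_l<b_{l'}$, so the $A$-occurrence of $P_{i,c_l}$ precedes that of $P_{i,c_{l'}}$ inside $ASelection_i$, i.e. $n-c_l+1<n-c_{l'}+1$, i.e. $c_l>c_{l'}$ --- a contradiction. Hence $|L|\le 1$, which is exactly the claim.

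The argument is essentially immediate; the only point needing care is the bookkeeping in the first step --- translating ``a color appearing in $MSelection_i$'' into ``its matched color appearing in $ASelection_i$'' via Proposition~\ref{prop:mselectmatchaselect}, and using the uniqueness of each $P$-color in $M$ and in $A$ (Observations~\ref{obs:eachcolorappearsonceinM}--\ref{obs:exactlyonecolorexceptedges}) to pin down exactly one position per selected color on each side, so that the ``reverse order'' observation applies cleanly. Note that the critical-character constraint is not used at all: the statement is purely about the order of colors.
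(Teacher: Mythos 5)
Your proof is correct and follows essentially the same route as the paper's: invoke Proposition~\ref{prop:mselectmatchaselect} to confine the matches to the pair $MSelection_i$/$ASelection_i$, then use the fact that $ASelection_i$ carries the same (distinct) colors in reversed order, so no two of them can be matched consistently with the strictly increasing index sequences. You merely spell out in detail the paper's one-line observation that the longest common color subsequence of these two subgadgets has length $1$.
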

\begin{proof}
Directly following the Proposition \ref{prop:mselectmatchaselect} and the fact that colors of $ASelection_i$ is the reversed sequence of $MSelection_i$ colors, the longest common subsequence of colors between the said subgadgets is of size $1$. Consequently, at most one colored index from $MSelection_i$ and $ASelection_i$ can be included in any OMDCI solution.
\end{proof}
\begin{proposition}\label{prop:p-to-s}
For all $i \in [n]$, If the colored interval $P_{i,j}$ is included in the OMDCI solution, then the colored interval $S_{i,j}$ is included in the solution.
\end{proposition}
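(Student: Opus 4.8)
\noindent\emph{Proof proposal.} The plan is to chase the critical character $s_{i,j}$ through the two strings and then use the multiset-equality requirement of Definition~\ref{def:gomdic}. Suppose the solution includes the colour $P_{i,j}$, i.e.\ there is an index $l$ with $\gamma_M(a_l)=\gamma_A(b_l)=P_{i,j}$. By Observation~\ref{obs:eachcolorappearsonceinM} the colour $P_{i,j}$ occurs exactly once in $M$, namely at the index of $MSelection_i$ carrying critical character $p_{i,j}$ (see~\eqref{eq:mselwi}); by Observation~\ref{obs:exactlyonecolorexceptedges} it occurs exactly once in $A$, namely at the index of $ASelection_i$ carrying critical character $s_{i,j}$ (see~\eqref{eq:aselwi}; this is also exactly what Proposition~\ref{prop:mselectmatchaselect} guarantees). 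Hence $\sigma_M(a_l)=p_{i,j}$ and $\sigma_A(b_l)=s_{i,j}$, so in particular $s_{i,j}\in\mathcal{C}_{A'}$.

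Next I would invoke the second required property of Definition~\ref{def:gomdic}, namely $\mathcal{C}_{M'}=\mathcal{C}_{A'}$, to deduce $s_{i,j}\in\mathcal{C}_{M'}$; thus some index of $I_{M'}$ is mapped by $\sigma_M$ to $s_{i,j}$. By inspection of the construction, the character $s_{i,j}$ never appears in an $MSelection$ block (those use $\Sigma^{(P)}$ characters, see~\eqref{eq:mselwi}) nor in any $MPostLink$ block (those use $\Sigma^{(T)}$ characters, see~\eqref{eq:mpostlink-with-ij}); it occurs in $M$ only inside $MPreLink_{i,j}$, and there exactly once by Observation~\ref{obs:eachcolorappearsonceinM}, at the index coloured $S_{i,j}$ in~\eqref{eq:mprelink-with-ij-upto-n-1}. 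Therefore that index lies in $I_{M'}$, so the colour $S_{i,j}$ belongs to the common colour subsequence; in particular $S_{i,j}$ is included in the solution. (If one wants the location of the matched $A$-side index as well, applying Proposition~\ref{prop:mprelink-matches-aprelink} places it inside $APreLink_{i,j}$.)

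The argument is short, and I expect the only real care needed to be bookkeeping: verifying from~\eqref{eq:mselwi}, \eqref{eq:mprelink-with-ij-upto-n-1} and~\eqref{eq:mpostlink-with-ij} that $s_{i,j}$ has a unique occurrence in $M$ and that this occurrence is in the block $MPreLink_{i,j}$ (for exactly that pair $(i,j)$, since $MPreLink_{i',j'}$ carries $s_{i',j'}$). Everything else follows mechanically from Proposition~\ref{prop:mselectmatchaselect} and the critical-character constraint. One mild subtlety worth a sentence is that $\mathcal{C}_{M'}=\mathcal{C}_{A'}$ is read as an equality of multisets, as the preliminaries clarify, although for this proposition plain set membership of $s_{i,j}$ already suffices.
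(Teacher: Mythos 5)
Your proposal is correct and follows essentially the same route as the paper's proof: locate the unique occurrences of the colour $P_{i,j}$ in $M$ and $A$ (carrying the characters $p_{i,j}$ and $s_{i,j}$ respectively), then use the multiset equality $\mathcal{C}_{M'}=\mathcal{C}_{A'}$ together with the fact that $s_{i,j}$ occurs in $M$ only at the index coloured $S_{i,j}$ in $MPreLink_{i,j}$. The only cosmetic difference is that you invoke Observations~\ref{obs:eachcolorappearsonceinM} and \ref{obs:exactlyonecolorexceptedges} directly where the paper cites Proposition~\ref{prop:mselectmatchaselect}, which rests on the same observations.
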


\begin{proof}
First observe that for any \(i \in [m]\) the critical characters $p_{i,j}$ and $s_{i,j}$ appear exactly once in $M$ and are associated with the colored intervals $P_{i,j}$ and $S_{i,j}$, respectively, in the selection gadgets $MSelection_i$ and $MPreLink_{i,j}$. Now, assume that the color $P_{i,j}$ is included in the OMDCI solution; this assumption coupled with Proposition \ref{prop:mselectmatchaselect} implies that the solution contains the critical characters $p_{i,j}$ and $s_{i,j}$. Following the previously mentioned observation and the Property \ref{property:setofcoveredcriticalinstructionsareequal} of the OMDCI solution; it is immediately evident that the solution must contain the color $S_{i,j}$ to cover the critical character $s_{i,j}$. 
\end{proof}

\subsubsection{Properties from the \(Linking\) Gadgets}\label{subsec:linking-prop}
We now show some properties dependent on the construction of the \(Linking\)
gadget.
The first group of properties show how the \(Linking\) gadget ``links'' 
colors in \(\Gamma\) together. 

\begin{proposition}\label{prop:s-to-e}
    For any \(i,j \in [n]\), 
    if there exist some index in \(M'\) and \(A'\) with color \(S_{i,j}\)
    then there must exist some index in \(M'\) and \(A'\) with
    color \(T_{i,j}\).
\end{proposition}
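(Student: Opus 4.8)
\emph{Proof proposal.} The plan is to chase the critical characters that the hypothesis forces, using that each color or character involved occurs in a unique place in the string where it matters. Suppose $\gamma_M(a_l)=\gamma_A(b_l)=S_{i,j}$ for some $l\in[k]$ (this is exactly what the hypothesis gives, together with Property~\ref{property:colorsaremakeacommonsubsequence}).

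First I would pin down the index $b_l$ of $A'$. Since $S_{i,j}\in\Gamma^{(S)}_i$ lies outside $\cup_{i}\Gamma^{(T)}_i$, Observation~\ref{obs:exactlyonecolorexceptedges} tells us that exactly one index of $A$ is colored $S_{i,j}$, and by construction (Equation~\ref{eq:aprelink-with-ij-upto-n-1}) that index is the single symbol of $APreLink_{i,j}$, namely $\frac{t_{i,j}}{S_{i,j}}$. Hence $\sigma_A(b_l)=t_{i,j}$, so $t_{i,j}\in\mathcal{C}_{A'}$. Next I would push this across to $M'$ via Property~\ref{property:setofcoveredcriticalinstructionsareequal}: from $\mathcal{C}_{M'}=\mathcal{C}_{A'}$ we get $t_{i,j}\in\mathcal{C}_{M'}$, i.e. $\sigma_M(a_{l'})=t_{i,j}$ for some $l'\in[k]$. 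By Observation~\ref{obs:eachcolorappearsonceinM}, $M$ is a permutation string, so $t_{i,j}$ occurs exactly once in $M$ — at the single symbol of $MPostLink_{i,j}$, i.e. $\frac{t_{i,j}}{T_{i,j}}$ (Equation~\ref{eq:mpostlink-with-ij}). Therefore $\gamma_M(a_{l'})=T_{i,j}$, so $T_{i,j}$ occurs at an index of $M'$, and by Property~\ref{property:colorsaremakeacommonsubsequence} it occurs at the matching index $b_{l'}$ of $A'$ as well, which is what we want.

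I do not expect a real obstacle: the argument is a short chain of ``forced character $\Rightarrow$ unique occurrence $\Rightarrow$ forced color'' steps. The only points that need care are the two uniqueness claims — that $S_{i,j}$ has a unique occurrence in $A$ (which relies on $S_{i,j}\notin\cup_i\Gamma^{(T)}_i$, so that Observation~\ref{obs:exactlyonecolorexceptedges} applies rather than Observation~\ref{obs:endsoccurdegeachunique}) and that $t_{i,j}$ has a unique occurrence in $M$ (immediate from $M$ being a permutation). Note that we never need $t_{i,j}$ to be covered in $A'$ by the $APreLink_{i,j}$ symbol specifically; only the membership $t_{i,j}\in\mathcal{C}_{A'}$ is used, which keeps the argument insensitive to how the rest of the solution is routed.
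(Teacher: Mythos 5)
Your proof is correct and follows essentially the same route as the paper's: locate the forced occurrence of $t_{i,j}$ on the $A$-side via the uniqueness of the color $S_{i,j}$, transfer it to $\mathcal{C}_{M'}$ by Property~\ref{property:setofcoveredcriticalinstructionsareequal}, and then use the uniqueness of $t_{i,j}$ in the permutation string $M$ to force the color $T_{i,j}$. The only cosmetic difference is that the paper invokes Proposition~\ref{prop:mprelink-matches-aprelink} where you argue directly from Observations~\ref{obs:eachcolorappearsonceinM} and \ref{obs:exactlyonecolorexceptedges}, which is the same underlying reasoning.
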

\begin{proof}
    The proof follows directly from Observations \ref{obs:eachcolorappearsonceinM}, and \ref{obs:exactlyonecolorexceptedges}. Assume that the color $S_{i,j} \in MPreLink_{i,j}$ is selected in the solution. This assumption, combined with Proposition \ref{prop:mprelink-matches-aprelink}, implies that the critical characters $s_{i,j}$ and $t_{i,j}$ are selected in $M'$ and $A'$, respectively. Consequently, the solution must also select the color $T_{i,j}$, as it is the only color associated with the critical character $t_{i,j}$, in order to satisfy Property \ref{property:setofcoveredcriticalinstructionsareequal}.
\end{proof}

\begin{proposition}\label{prop:e-to-l-p}
    For any \(i,j \in [n]\), 
    when there exist some index in \(M'\) and \(A'\) with color \(T_{i,j}\)
    then the following hold:
there must exist some index in \(M'\) and \(A'\) colored \( P_{1 + (i \% n), k} \), where \(v_k \sim_E v_j\).
\end{proposition}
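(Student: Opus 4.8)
The plan is to trace the consequences of selecting a color $T_{i,j}$ through the structure of the $Linking$ gadget, using the propositions about where matched colors must live. First I would invoke Proposition~\ref{prop:mpostlink-matches-mpostlink}: if the color $T_{i,j}$ is selected in $M'$ and $A'$, then the matched index on the $M$ side lies in $MPostLink_{i,j}$ and the matched index on the $A$ side lies in $APostLink_{i,j}$. On the $M$ side, $MPostLink_{i,j} = \frac{t_{i,j}}{T_{i,j}}$ is a single colored index, so its critical character $t_{i,j}$ is forced into $\mathcal{C}_{M'}$. By Property~\ref{property:setofcoveredcriticalinstructionsareequal} of any OMDCI solution, $\mathcal{C}_{M'} = \mathcal{C}_{A'}$, so $t_{i,j}$ must also appear in $\mathcal{C}_{A'}$; hence some selected index in $A'$ carries the critical character $t_{i,j}$.

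Next I would pin down where that character $t_{i,j}$ can come from in $A$. By Observation~\ref{obs:exactlyoneciexceptpicksci} (note $t_{i,j} \in \Sigma^{(T)}_i$, which is not one of the ``pick'' sets $\Sigma^{(P)}_i$), the character $t_{i,j}$ appears exactly once in $A$ — and by construction that single occurrence is $\frac{t_{i,j}}{S_{i,j}}$ inside $APreLink_{i,j}$, via equation~\eqref{eq:aprelink-with-ij-upto-n-1}. So the selected index in $A'$ bearing $t_{i,j}$ is colored $S_{i,j}$. Then the matching color on the $M$ side is also $S_{i,j}$, and by Proposition~\ref{prop:mprelink-matches-aprelink} this index is matched inside $MPreLink_{i,j} = \frac{s_{i,j}}{S_{i,j}}$, which forces the character $s_{i,j}$ into $\mathcal{C}_{M'}$ as well. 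Now I apply the same ``character must be covered on the $A$ side'' argument once more to $s_{i,j}$: by Observation~\ref{obs:exactlyoneciexceptpicksci}, $s_{i,j} \in \Sigma^{(S)}_i$ appears exactly once in $A$, namely in $ASelection_i$ as $\frac{s_{i,j}}{P_{i,j}}$. Hence the color $P_{i,j}$ is selected in $M'$ and $A'$. But that is not yet the claimed conclusion — I have produced $P_{i,j}$, whereas the statement demands $P_{1+(i\%n),k}$ with $v_k \sim_E v_j$.

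To close the gap I return to the $A$ side of $APostLink_{i,j}$, which by equation~\eqref{eq:apostlink-with-ij} is the string $\frac{p_{1+(i\%n),\alpha[j,1]}}{T_{i,j}} \cdots \frac{p_{1+(i\%n),\alpha[j,\deg(v_j)]}}{T_{i,j}}$. The selected index in $A'$ colored $T_{i,j}$ is one of these, so its critical character is $p_{1+(i\%n),\alpha[j,r]}$ for some $r \in [\deg(v_j)]$; set $k = \alpha[j,r]$, which by definition of the adjacency list $\alpha[j,\cdot]$ satisfies $v_k \sim_E v_j$. This character $p_{1+(i\%n),k}$ is therefore in $\mathcal{C}_{A'} = \mathcal{C}_{M'}$, so it is covered on the $M$ side as well. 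Since $p_{1+(i\%n),k} \in \Sigma^{(P)}_{1+(i\%n)}$ appears exactly once in $M$ (Observation~\ref{obs:eachcolorappearsonceinM}), namely as $\frac{p_{1+(i\%n),k}}{P_{1+(i\%n),k}}$ in $MSelection_{1+(i\%n)}$, the color $P_{1+(i\%n),k}$ must be selected in $M'$; by Proposition~\ref{prop:mselectmatchaselect} the matching index in $A'$ then lies in $ASelection_{1+(i\%n)}$ and also carries color $P_{1+(i\%n),k}$. This is exactly the desired conclusion.

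The main obstacle, and the place to be careful, is the double use of the ``critical character forces a colored index'' step: each time I must confirm that the character in question lies outside the multiply-occurring ``pick'' sets $\Sigma^{(P)}_i$ (so that Observation~\ref{obs:exactlyoneciexceptpicksci} applies and the occurrence in $A$ is unique), and that the unique occurrence is the gadget I claim it is. The characters $t_{i,j}$ and $s_{i,j}$ are of types $(T)$ and $(S)$, so this is fine; the final character $p_{1+(i\%n),k}$ is of type $(P)$, so there I instead rely on its uniqueness in $M$ (Observation~\ref{obs:eachcolorappearsonceinM}) rather than in $A$. One should also double-check the index bookkeeping for the wrap-around $1+(i\%n)$ and that $k$ ranges exactly over the neighbors of $v_j$, which is immediate from the definition of $\alpha[j,\cdot]$.
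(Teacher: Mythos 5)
Your proposal is correct, and its decisive step (the third paragraph: the index of $A'$ colored $T_{i,j}$ lies in $APostLink_{i,j}$, so its critical character is some $p_{1+(i\%n),\alpha[j,r]}$ with $v_{\alpha[j,r]}\sim_E v_j$, and covering this character on the $M$ side via Observation~\ref{obs:eachcolorappearsonceinM} and Equation~\eqref{eq:mselwi} forces the color $P_{1+(i\%n),k}$ into the solution) is essentially identical to the paper's proof. The first two paragraphs, tracing $T_{i,j}$ back to $S_{i,j}$ and $P_{i,j}$, are sound but not needed for this proposition.
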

\begin{proof}
First let us assume that the solution contains the color \(T_{i,j}\) for some \(i,j \in [n]\). Note that the color \(T_{i,j}\) appears only in the gadget \(APostLink_{i,j}\) within \(A\). Consequently, the presence of the color \(T_{i,j}\) in \(A'\), specifically in \(APostLink_{i,j}\), must include a critical character in the form \(p_{1 + (i\%n), \alpha[j,q]}\) where \(q \in [deg(v_j)]\) and \(k = \alpha[j,q]\). Since \(p_{1 + (i\%n), k}\) is part of the solution, it implies, by Observation \ref{obs:eachcolorappearsonceinM} as well as Equation \ref{eq:mselwi}, the solution must also contain the color \(P_{1 + (i\%n), k}\). Furthermore, according to Equation \ref{eq:apostlink-with-ij}, \(p_{1 + (i\%n), k}\) exists in \(APostLink_{i,j}\) only if there exists \(v_j \sim_E v_k\). This proves the proposition.
\end{proof}

Combining Propositions \ref{prop:p-to-s}, \ref{prop:s-to-e}, and \ref{prop:e-to-l-p},
 yields the following:
\begin{lemma}\label{prop:p-to-p}
    For any \(i,j \in [n]\), 
    if there exist some index in \(M'\) and \(A'\) with color \(P_{i,j}\)
    then there must exist some index \(k \in [n]\) such that there exist \(v_j \sim v_k\) and the colors \(P_{1 + (i \% n), k}\) and \(S_{1 + (i \% n), k}\) exist in \(M'\) and \(A'\).
\end{lemma}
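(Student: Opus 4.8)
The plan is to chain together the three preceding propositions exactly as the paragraph preceding the lemma suggests, so the argument is essentially a bookkeeping exercise tracking which colors a solution is forced to contain. Suppose the color $P_{i,j}$ appears at some index in $M'$ and $A'$. First I would apply Proposition \ref{prop:p-to-s}: since $P_{i,j}$ is in the solution, the color $S_{i,j}$ must also be in the solution (in both $M'$ and $A'$), because $p_{i,j}$ is forced into the covered-character multiset and $S_{i,j}$ is the unique color in $M$ carrying the character $s_{i,j}$ that must balance it.

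Next I would feed $S_{i,j}$ into Proposition \ref{prop:s-to-e} to conclude that the color $T_{i,j}$ appears at some index in $M'$ and $A'$: the presence of $S_{i,j}$ forces $s_{i,j}$ and (via $APreLink_{i,j}$ and Proposition \ref{prop:mprelink-matches-aprelink}) $t_{i,j}$ into the covered characters, and $T_{i,j}$ is the only color carrying $t_{i,j}$. Finally I would apply Proposition \ref{prop:e-to-l-p} with this $T_{i,j}$: the color $T_{i,j}$ occurs in $A$ only inside $APostLink_{i,j}$, whose indices all carry characters of the form $p_{1+(i\%n),\alpha[j,q]}$ with $q\in[\deg(v_j)]$, so some such character is forced into the solution; writing $k=\alpha[j,q]$ we get $v_j\sim_E v_k$ by construction of Equation \ref{eq:apostlink-with-ij}, and then Observation \ref{obs:eachcolorappearsonceinM} together with Equation \ref{eq:mselwi} forces the color $P_{1+(i\%n),k}$ into $M'$ and $A'$. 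One more invocation of Proposition \ref{prop:p-to-s} applied to $P_{1+(i\%n),k}$ then yields $S_{1+(i\%n),k}$ in the solution as well, giving both colors asserted in the statement for the same index $k$.

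I do not expect any genuine obstacle here, since every implication used is already isolated as a proposition; the only point needing a little care is making sure the \emph{same} $k$ works for both $P_{1+(i\%n),k}$ and $S_{1+(i\%n),k}$ — but this is immediate because $S_{1+(i\%n),k}$ is obtained by applying Proposition \ref{prop:p-to-s} to the very color $P_{1+(i\%n),k}$ produced in the previous step, so $k$ is not re-chosen. It is also worth noting explicitly that $v_j\sim_E v_k$ comes for free from the fact that $APostLink_{i,j}$ only lists neighbors of $v_j$ in Equation \ref{eq:apostlink-with-ij}, so no separate adjacency argument is needed. The lemma is then simply the composition of these forced steps, and I would present it in two or three sentences citing Propositions \ref{prop:p-to-s}, \ref{prop:s-to-e}, and \ref{prop:e-to-l-p} in that order.
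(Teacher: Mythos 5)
Your proposal is correct and follows essentially the same route as the paper's own proof: chain Propositions \ref{prop:p-to-s}, \ref{prop:s-to-e}, and \ref{prop:e-to-l-p} to obtain $P_{1+(i\%n),k}$ with $v_j\sim_E v_k$, then apply Proposition \ref{prop:p-to-s} once more to get $S_{1+(i\%n),k}$. Your added remarks (that the same $k$ is preserved and that adjacency comes from the construction of $APostLink_{i,j}$) are consistent with, and already implicit in, the paper's argument.
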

\begin{proof}
First, assume the premise. Then, by sequentially applying Modus Ponens to Propositions \ref{prop:p-to-s}, \ref{prop:s-to-e}, and \ref{prop:e-to-l-p}, we conclude that the color \(P_{1 + (i \% n), k}\) exists in both \(M'\) and \(A'\). Applying the Proposition \ref{prop:p-to-s} once more, we further deduce that the color \(S_{1 + (i \% n), k}\) must also exist in \(M'\) and \(A'\). This completes the proof.
\end{proof}

\begin{figure}[t]
\centering
\includegraphics[width=0.50\textwidth]{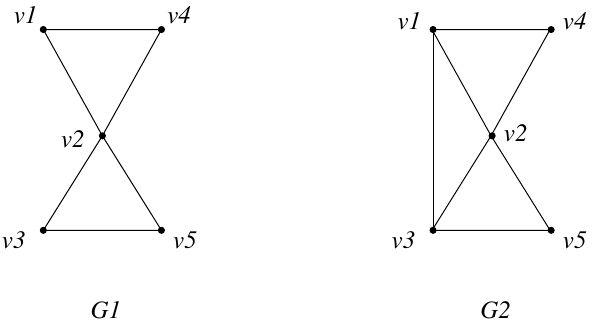}
\caption{\label{fig:fig1}The reductions from two graphs $G_1$ and $G_2$ would create different OMDCI instances.}
\label{fig2}
\end{figure}

We use $G_2$ in Figure~\ref{fig2} as an example to illustrate this lemma. In this case, consider the Hamiltonian cycle: $\langle 5,3,1,4,2\rangle$. We first match the color $P_{1,5}$ in \(MSelection_1\) and \(ASelection_1\). Then \(MPreLink_{1,5} \) enforces the selection of $S_{1,5}$, which is selected in turn by \(APreLink_{1,5}\), together with the critical character $t_{1,5}$. Finally, \(MPostLink_{1,5}\)  would need to match the character $t_{1,5}$, which would also pick the color $T_{1,5}$. Consequently, \(APostLink_{1,5}\) would select
a critical character at step 2, which is $p_{2,3}$. Then we would repeat the above process by selecting the color $P_{2,3}$ and then $S_{2,3}$. We show the following flows (not the exact sequences, as $p_{i,j}$'s would be put together at the beginning of $M$ and $s_{i,j}$'s would be put together at the beginning of $A$) of $M'$ and $A'$, only giving details for the first selected vertex (among the 5 selected vertices):

 \[Flow~of~M'=MSelection_1\rightarrow MPrelink_{1,5}\rightarrow MPostLink_{1,5}\rightarrow MSelection_2 \cdots,\]
which expands to
\[ Flow~of~M' = \frac{p_{1,5}}{P_{1,5}}\rightarrow \frac{s_{1,5}}{S_{1,5}}\rightarrow \frac{t_{1,5}}{T_{1,5}}
\rightarrow\frac{p_{2,3}}{P_{2,3}}\cdots \frac{p_{3,1}}{P_{3,1}} \cdots \frac{p_{4,4}}{P_{4,4}} \cdots \frac{p_{5,2}}{P_{5,2}} \cdots\frac{t_{5,2}}{T_{5,2}} ;\]
and,
 \[Flow~of~A'=ASelection_1\cdot APrelink_{1,5}\cdots APostLink_{1,5}\circ ASelection_2 \cdots,\]
which expands to
\[ Flow~of~A' = \frac{s_{1,5}}{P_{1,5}}\rightarrow \frac{t_{1,5}}{S_{1,5}}\rightarrow \frac{p_{2,3}}{T_{1,5}}
\rightarrow\frac{s_{2,3}}{P_{2,3}}\rightarrow \frac{s_{3,1}}{P_{3,1}} \cdots \frac{s_{4,4}}{P_{4,4}} \cdots \frac{s_{5,2}}{P_{5,2}} \cdots\frac{p_{1,5}}{T_{5,2}} .\]
Note that the first character $p_{1,5}$ in $M'$ matches the last character $p_{1,5}$ in $A'$.

The following proposition is useful in showing the \(Linking\) gadget prevents
vertices from being selected multiple times. 
\begin{proposition}\label{prop:no_l_repeats}
    For any \(i,t \in [n]\) where \(i \neq t\), if there exists \(k \in [n]\) such that \(S_{t, k}\) is included in \(M'\) and \(A'\), then  \(S_{i,k}\) can not be included in the solution.
\end{proposition}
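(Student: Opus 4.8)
The plan is to argue by contradiction, exploiting the asymmetry between how the \(PreLink\) and \(PostLink\) subgadgets are interleaved inside \(MLinking_k\) and merely grouped inside \(ALinking_k\). Suppose, for contradiction, that for some \(i \neq t\) both colors \(S_{t,k}\) and \(S_{i,k}\) occur in \(M'\) (equivalently in \(A'\), since \(M'\) and \(A'\) share the same color sequence), and put \(a := \min\{i,t\}\) and \(b := \max\{i,t\}\). Applying Proposition~\ref{prop:s-to-e} to \(S_{a,k}\) shows that the color \(T_{a,k}\) also occurs in \(M'\) and \(A'\). Thus the three colors \(S_{a,k},\,T_{a,k},\,S_{b,k}\) all appear in \(M'\), hence they appear in \(A'\) in the same relative order.

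Next I would locate these colors exactly in \(M\) and in \(A\). Since \(M\) is a permutation string (Observation~\ref{obs:eachcolorappearsonceinM}), the colors \(S_{a,k}\), \(T_{a,k}\), \(S_{b,k}\) each occur once, in \(MPreLink_{a,k}\), \(MPostLink_{a,k}\), \(MPreLink_{b,k}\) respectively by \eqref{eq:mprelink-with-ij-upto-n-1} and \eqref{eq:mpostlink-with-ij}. By \eqref{eq:mlinking-with-j} and \eqref{eq:mlinking}, \(MLinking_k = \prod_{i\in[n]}\bigl(MPreLink_{i,k}\circ MPostLink_{i,k}\bigr)\) is a contiguous block of \(M\), so --- using \(a<b\) --- their order inside \(M\) is \(S_{a,k}\) (at the position of \(MPreLink_{a,k}\)), then \(T_{a,k}\) (at \(MPostLink_{a,k}\)), then \(S_{b,k}\) (at \(MPreLink_{b,k}\)). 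On the \(A\) side, by Observation~\ref{obs:exactlyonecolorexceptedges} the colors \(S_{a,k}\) and \(S_{b,k}\) occur at unique indices, inside \(APreLink_{a,k}\) and \(APreLink_{b,k}\), while by the construction \eqref{eq:apostlink-with-ij} together with Observation~\ref{obs:endsoccurdegeachunique} every index colored \(T_{a,k}\) lies in \(APostLink_{a,k}\). By \eqref{eq:alinking-with-j} and \eqref{eq:alinking}, \(ALinking_k = \bigl(\prod_{i\in[n]}APreLink_{i,k}\bigr)\circ\bigl(\prod_{i\in[n]}APostLink_{i,k}\bigr)\) is a contiguous block of \(A\) in which every \(APreLink\) subgadget precedes every \(APostLink\) subgadget; hence in \(A\) every \(T_{a,k}\)-colored index comes strictly after both the \(S_{a,k}\)-index and the \(S_{b,k}\)-index.

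Putting the pieces together finishes the argument: because \(A'\) realizes the colors \(S_{a,k},\,T_{a,k},\,S_{b,k}\) in that order (copied from \(M'\)), the solution selects indices \(b_1<b_2<b_3\) of \(A\) colored \(S_{a,k}\), \(T_{a,k}\), \(S_{b,k}\) respectively; but by the previous paragraph the \(T_{a,k}\)-colored index \(b_2\) must lie after the \(S_{b,k}\)-colored index \(b_3\), i.e.\ \(b_2>b_3\), a contradiction. Therefore \(S_{i,k}\) cannot be included once \(S_{t,k}\) is.

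I expect the delicate part to be the ordering bookkeeping rather than any deep idea: one must be sure that \(MLinking_k\) and \(ALinking_k\) are contiguous so that no occurrence of these colors tied to a different second index \(k'\) can slip in, that inside \(MLinking_k\) the piece \(MPostLink_{a,k}\) genuinely sits strictly between \(MPreLink_{a,k}\) and \(MPreLink_{b,k}\) when \(a<b\), and that inside \(ALinking_k\) every \(APreLink\) piece really precedes every \(APostLink\) piece. The entire contradiction rests on this structural asymmetry --- alternation of PreLink/PostLink in \(M\) versus the PreLinks-then-PostLinks grouping in \(A\) --- so this is the comparison that deserves the most care.
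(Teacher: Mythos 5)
Your proof is correct and follows essentially the same route as the paper's: apply Proposition~\ref{prop:s-to-e} to force a $T$-color into the solution, then contradict the required color order using the interleaved PreLink/PostLink structure of $MLinking_k$ versus the all-PreLinks-before-all-PostLinks structure of $ALinking_k$. The only (harmless) difference is that you derive the contradiction from just the three colors $S_{a,k}, T_{a,k}, S_{b,k}$, while the paper also invokes $T_{b,k}$ and Propositions~\ref{prop:mprelink-matches-aprelink} and~\ref{prop:mpostlink-matches-mpostlink}.
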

\begin{proof}

    Assume, without loss of generality, that \(t < i\). We proceed by negating the claim. Specifically, suppose that for some \(i, k, t \in [n]\) and the colors \(S_{t,k}\) and \(S_{i,k}\) are included in \(M'\) and \(A'\).  
    Note that $S_{t,k}$ and $S_{i,k}$ appear in the $MLinking_k$ and $ALinking_k$ gadgets within $M$ and $A$, respectively. More specifically, $S_{t,k}$ appears in $MPreLink_{t,k}$ and $APreLink_{t,k}$, while $S_{i,k}$ appears in $MPreLink_{i,k}$ and $APreLink_{i,k}$.  

    By Proposition \ref{prop:s-to-e}, we also deduce that the colors $T_{t,k}$ and $T_{i,k}$ must be included in $M'$ and $A'$. These colors reside in the $MLinking_k$ and $ALinking_k$ gadgets, where $T_{t,k}$ appears in $MPostLink_{t,k}$ and $APostLink_{t,k}$, and $T_{i,k}$ appears in $MPostLink_{i,k}$ and $APostLink_{i,k}$.  

    Thus, the colors $S_{t,k}, S_{i,k}, T_{t,k}, T_{i,k}$ must all appear in $M'$ and $A'$. Now, consider the order of these subgadgets within $MLinking_k$ and $ALinking_k$, as described in Equations \ref{eq:alinking-with-j} and \ref{eq:mlinking-with-j}:  
    
    \begin{itemize}
        \item In $MLinking_k$, for all $j \in [n]$, the sequence $MPreLink_{j,k} \circ MPostLink_{j,k}$ appears consecutively. Hence, $MPreLink_{t,k} \circ MPostLink_{t,k}$ must appear before $MPreLink_{i,k} \circ MPostLink_{i,k}$.  
        \item In $ALinking_k$, for all $j \in [n]$, the $APreLink_{j,k}$ subgadgets always appear before any $APostLink_{j,k}$ subgadgets.
    \end{itemize}
    By Propositions \ref{prop:mprelink-matches-aprelink} and \ref{prop:mpostlink-matches-mpostlink}, the color $S_{t,k}$ is matched within $MPreLink_{t,k}$  and $APreLink_{t,k}$, and the color $T_{t,k}$ is matched within $MPostLink_{t,k}$ and $APostLink_{t,k}$. Similarly, by Proposition \ref{prop:mprelink-matches-aprelink}, the color $S_{i,k}$ must be matched within $MPreLink_{i,k} \text{ and } APreLink_{i,k}$. 
    However, this order of colors is impossible to achieve in the common subsequence, leading to a contradiction. Thus, the initial assumption must be false.  
\end{proof}

We use $G_1$ in Figure~\ref{fig2} as an example to illustrate this observation. In this case, consider the path we attempt to include in the Hamiltonian cycle:
$\langle 1,2,3,5,2\rangle$. 
Initially, the color match $P_{1,1}$ is matched within $MSelection_1$ and $ASelection_1$.
Then \(MPreLink_{1,1} \) enforces the selection of $S_{1,1}$, which is selected in turn by \(APreLink_{1,1}\), together with the critical character $t_{1,1}$. Finally, \(MPostLink_{1,1}\)  would need to match the character $t_{1,1}$, which would also pick the color $T_{1,1}$. Consequently, \(APostLink_{1,1}\) would select
a critical character starting the selection of the second vertex in the cycle, which is $p_{2,2}$. Then we would repeat the above process by selecting the color $P_{2,2}$ and then $S_{2,2}$, etc. We show the following flows (again, not the exact sequences) of $M'$ and $A'$, only giving details for the first selected vertex (among the 5 selected vertices):
\begin{equation}
    \begin{split}
        &Flow~of~M'=\\
        &MSelection_1\rightarrow MPrelink_{1,5}\rightarrow MPostLink_{1,5}\rightarrow MSelection_2 \cdots,
    \end{split}
\end{equation}
which expands to
\[ Flow~of~M' = \frac{p_{1,1}}{P_{1,1}}\rightarrow \frac{s_{1,1}}{S_{1,1}}\rightarrow \frac{t_{1,1}}{T_{1,1}}
\rightarrow\frac{p_{2,2}}{P_{2,2}}\rightarrow\frac{s_{2,2}}{S_{2,2}}\cdots \frac{s_{3,3}}{S_{3,3}} \cdots \frac{s_{4,5}}{S_{4,5}} \rightarrow\frac{t_{4,5}}{T_{4,5}},\]
and
 \[Flow~of~A'=ASelection_1\rightarrow APrelink_{1,5}\rightarrow APostLink_{1,5}\rightarrow ASelection_2 \cdots,\]
which expands to
\[ Flow~of~A' = \frac{s_{1,1}}{P_{1,1}}\rightarrow \frac{t_{1,1}}{S_{1,1}}\rightarrow \frac{p_{2,2}}{T_{1,1}}
\rightarrow\frac{s_{2,2}}{P_{2,2}}\rightarrow \frac{t_{2,2}}{S_{2,2}}\cdots \frac{t_{3,3}}{S_{3,3}} \cdots \frac{t_{4,5}}{S_{4,5}} \rightarrow\frac{p_{5,2}}{T_{4,5}} .\]

Note that $p_{5,2}$ never appears in $M'$. In fact, since $G_1$ has no Hamiltonian cycle, even if we add the character $p_{5,2}$ to the end of $A'$ it can never match the first character $p_{1,1}$ in $M'$.

\subsection{Proof of Theorem~\ref{thm:0-OMDCI}}

We now proceed to prove Theorem~\ref{thm:0-OMDCI}, as the ``if and only if'' relation is a bit long, we separate it out as the following lemma.


\begin{lemma}\label{lemma:omdci-0-iff}
The graph \(G\) has a Hamiltonian cycle if and only if there exists \(k > 0\) such that there exist subsequences \(M'\) of \(\gamma_M\) and \(A'\) of \(\gamma_A\) with \(I_{M'} = \{a_1, a_2, \ldots , a_k\}\), \(I_{A'} =\{b_1, b_2, \ldots , b_k\}\) such that properties of definition \ref{def:gomdic} holds.

\end{lemma}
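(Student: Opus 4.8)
The plan is to prove the biconditional in two directions, relying heavily on the structural propositions already established in Section 4.

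For the forward direction (Hamiltonian cycle $\Rightarrow$ nonzero solution), I would start from a Hamiltonian cycle and, by the chosen vertex ordering $\langle v_1,\dots,v_n\rangle$, read off the successor function: if the cycle visits vertices in order $w_1, w_2, \dots, w_n, w_1$, then for the $i$-th step we ``pick'' vertex $w_i$, and I will index the picked vertex at step $i$ by its position, i.e. set $q_i$ so that $v_{q_i} = w_i$. I would then construct $M'$ and $A'$ explicitly: for each $i \in [n]$, select the colored index $P_{i,q_i}$ from $MSelection_i$ and $ASelection_i$ (this is legal since each reversed block of length $n$ has an LCS of size $1$ with its forward counterpart), then the index $S_{i,q_i}$ from $MPreLink_{i,q_i}$ and $APreLink_{i,q_i}$, then the index $T_{i,q_i}$ from $MPostLink_{i,q_i}$ and $APostLink_{i,q_i}$ --- where in $APostLink_{i,q_i}$ I pick precisely the entry whose critical character is $p_{1+(i\%n),\,q_{i+1}}$, which exists because $v_{q_i} \sim_E v_{q_{i+1}}$ (consecutive vertices on the cycle are adjacent). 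The key bookkeeping is twofold: first, the order of the selected indices in $\gamma_M$ (the $p$'s, then the $s$'s and $t$'s interleaved by the $MLinking_j$ structure) and in $\gamma_A$ must both be strictly increasing; the flow diagrams in the excerpt show exactly this, and since the cycle visits each vertex once, for each fixed color-index $k$ at most one pair $(S_{i,k},T_{i,k})$ is used, so the ordering constraint inside $MLinking_k$ and $ALinking_k$ is satisfiable. Second, I must check the multiset-of-critical-characters condition: the characters collected are exactly $\{p_{i,q_i}, s_{i,q_i}, t_{i,q_i} : i\in[n]\}$ on the $M$ side (each appearing once by Observation~\ref{obs:eachcolorappearsonceinM}), and on the $A$ side $ASelection_i$ contributes $s_{i,q_i}$, $APreLink_{i,q_i}$ contributes $t_{i,q_i}$, and $APostLink_{i,q_i}$ contributes $p_{1+(i\%n),q_{i+1}} = p_{i+1,q_{i+1}}$ (indices mod $n$); summing over the cycle, the multiset of $p$-characters on the $A$ side is $\{p_{i,q_i}: i\in[n]\}$ again because the successor map is a bijection (it is a single $n$-cycle). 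Hence $\mathcal{C}_{M'} = \mathcal{C}_{A'}$ and $k = 3n > 0$.

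For the reverse direction (nonzero solution $\Rightarrow$ Hamiltonian cycle), I would take any solution of size $k>0$ and first argue it must actually ``use'' the selection gadgets. Since every critical character of $M$ appears exactly once, and the only characters that $A$ can contribute outside the $P$-colored positions in $ASelection$ are $s$'s and $t$'s, a nonempty matched multiset must include at least one $P_{i,j}$ color (a short case analysis: if only $S$- and $T$-colored indices were matched, the $p$-characters forced by $APostLink$ via Proposition~\ref{prop:e-to-l-p} would have to be covered, forcing a $P$-color after all; and a matched $S$ or $T$ color forces, via Propositions~\ref{prop:s-to-e} and \ref{prop:e-to-l-p} and \ref{prop:p-to-s}, an earlier $P$-color too). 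Once some $P_{i,j}$ is in the solution, Lemma~\ref{prop:p-to-p} kicks in and forces $P_{1+(i\%n),k}$ and $S_{1+(i\%n),k}$ for some $k$ with $v_j\sim_E v_k$; iterating this $n$ times walks through all $n$ selection gadgets (the first superscript cycles through $1,2,\dots,n$), producing a sequence of picked vertices $v_{j_1}, v_{j_2}, \dots, v_{j_n}$ with $v_{j_i} \sim_E v_{j_{i+1}}$ for all $i$ (indices mod $n$), i.e. a closed walk of length $n$ in $G$. Proposition~\ref{prop:no_l_repeats} then guarantees no vertex-index is picked twice (two colors $S_{t,k}, S_{i,k}$ with the same $k$ cannot coexist), so this closed walk visits $n$ distinct vertices --- a Hamiltonian cycle. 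Here I must also handle the ``$at most one per selection gadget$'' constraint (Proposition~\ref{prop:one-per-selection}) to conclude the walk is well-defined as a vertex sequence, and verify that the cyclic closure is genuine: the $n$-th step forces $P_{1+(n\%n),\cdot} = P_{1,\cdot}$, which by Proposition~\ref{prop:one-per-selection} must coincide with the original $P_{1,j_1}$, closing the cycle.

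The main obstacle I anticipate is the reverse direction's initial step --- showing rigorously that a nonzero solution is forced to engage the selection gadgets in the ``chained'' way, rather than merely matching a handful of isolated $C$-like filler characters. The propositions of Section 4 are all of the form ``if color $X$ is matched then color $Y$ is matched,'' which chain nicely once we have a foothold, but establishing that first foothold (that at least one $P$-color, and in fact a full consistent chain of them, must be present) requires carefully ruling out small ``junk'' solutions; this is where the multiset-equality Property~\ref{property:setofcoveredcriticalinstructionsareequal} does the real work, because any matched character must be balanced on both sides, and the only way to balance a $t$-character or a $p$-character drawn from an $APostLink$ block is to also pick the corresponding $MPostLink$/$MSelection$ index, propagating the chain backward and forward until it wraps around. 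I would isolate this as a preliminary claim before invoking Lemma~\ref{prop:p-to-p}.
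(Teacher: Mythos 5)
Your proposal is correct and follows essentially the same route as the paper's own proof: the forward direction constructs the size-$3n$ solution by selecting $P_{i,q_i}$, $S_{i,q_i}$, $T_{i,q_i}$ along the cycle (choosing in $APostLink_{i,q_i}$ the entry with character $p_{1+(i\%n),q_{i+1}}$) and verifies both properties by the same ordering and multiset-bijection arguments, while the reverse direction establishes a matched $P$-color foothold, chains Lemma~\ref{prop:p-to-p} around all $n$ selection gadgets, and uses Proposition~\ref{prop:no_l_repeats} to rule out repeated vertices. No substantive differences from the paper's argument.
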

Due to space constraint, we leave the proof 
of Lemma~\ref{lemma:omdci-0-iff} to the appendix.
We now complete the proof of Theorem~\ref{thm:0-OMDCI}.
 
\begin{proof}
    The construction of the \(Selection\)
    subgadgets uses \(O(n)\) characters thus
    the whole \(Selection\) gadget consists of
    \(O(n^2)\) characters.
    The \(PreLink_{i,j}\) and \(PostLink_{i,j}\) subgadgets are of size
    1 in \(M\) and of size \(1\) and \(deg(v_j)\)
    in \(A\). Thus, \(MLinking_{j}\) and \(ALinking_{j}\) are of sizes \(2n\) and
    \(n(1+deg(v_j))\) respectively.
    Consequently the size of \(MSelection\) gadget
        is \(2n^2\) and the size of \(ASelection\) is  \(\sum_j^{n} n(1+deg(v_j)) = n^2 + 2n|E|=O(n^3)\).
    Therefore the construction of \(M\) and \(A\) can be
    done in \(O(n^2 + n^3) = O(n^3)\) time.
    As \(G = (V,E)\) has no Hamiltonian cycle
    if and only if \(M\) and \(A\) have an optimal solution of size zero by Lemma~\ref{lemma:omdci-0-iff}, we have deciding 
    \omdci has a zero solution is co-NP-hard.
\end{proof}

Suppose there is a polynomial-time approximation algorithm ${\cal A}$ for OMDCI, then running ${\cal A}$ on the above instance, it would solve the instance (i.e., deciding that there is a size-zero solution) in polynomial time. We then have the following corollary.

\begin{corollary}
The maximization version of OMDCI has no polynomial-time approximation (regardless of its factor) unless P=co-NP.
\end{corollary}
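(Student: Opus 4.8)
The plan is to combine the co-NP-hardness of 0-OMDCI (Theorem~\ref{thm:0-OMDCI}) with the standard ``gap at zero'' phenomenon for maximization problems: any finite multiplicative approximation must report a positive value whenever the optimum is positive, and must report $0$ whenever the optimum is $0$, so the mere existence of such an approximation already decides whether the optimum is $0$.

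First I would pin down the maximization version precisely. On an instance $\langle M,A\rangle$ it asks for the largest $k$ admitting subsequences $M'$ of $\gamma_M$ and $A'$ of $\gamma_A$ of length $k$ meeting Properties~\ref{property:colorsaremakeacommonsubsequence} and~\ref{property:setofcoveredcriticalinstructionsareequal} of Definition~\ref{def:gomdic}. Because the empty subsequence is always feasible, the optimum $\mathrm{OPT}(\langle M,A\rangle)$ is a nonnegative integer, and $\mathrm{OPT}=0$ holds exactly on the ``yes'' instances of 0-OMDCI (i.e.\ no positive-length matching exists). I would then suppose, for contradiction, that for some finite factor $\rho\ge 1$ there is a polynomial-time algorithm ${\cal A}$ returning, on every instance, a feasible solution whose value satisfies $\mathrm{OPT}/\rho \le {\cal A}(\langle M,A\rangle) \le \mathrm{OPT}$.

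The heart of the argument is the equivalence ${\cal A}(\langle M,A\rangle)=0$ if and only if $\mathrm{OPT}(\langle M,A\rangle)=0$. If $\mathrm{OPT}=0$, then feasibility gives $0\le {\cal A}\le 0$, so ${\cal A}=0$. Conversely, since $\mathrm{OPT}$ is integer-valued, $\mathrm{OPT}>0$ forces $\mathrm{OPT}\ge 1$ and hence ${\cal A}\ge 1/\rho>0$; this is where integrality, not the size of $\rho$, does the work, so the conclusion is insensitive to the factor. Consequently, testing whether ${\cal A}$ outputs $0$ decides 0-OMDCI in polynomial time. By Theorem~\ref{thm:0-OMDCI} this problem is co-NP-hard, so such a procedure would place a co-NP-hard problem in $\mathrm{P}$, giving $\text{co-NP}\subseteq\mathrm{P}$; together with $\mathrm{P}\subseteq\text{co-NP}$ (as $\mathrm{P}$ is closed under complementation) this yields $\mathrm{P}=\text{co-NP}$, the desired contradiction. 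There is no genuine technical obstacle here; the single point to handle with care is ensuring that the integrality of $\mathrm{OPT}$ converts the bound $1/\rho$ into a strict separation from $0$, which is precisely what makes the non-approximability hold for every approximation factor.
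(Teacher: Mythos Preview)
Your argument is correct and is precisely the approach the paper takes: use a hypothetical polynomial-time approximation algorithm on the instance from Theorem~\ref{thm:0-OMDCI} and observe that it would decide 0-OMDCI in polynomial time, contradicting co-NP-hardness unless $\mathrm{P}=\text{co-NP}$. You have simply spelled out the ``gap at zero'' reasoning (integrality of $\mathrm{OPT}$ forcing ${\cal A}>0$ whenever $\mathrm{OPT}>0$) that the paper leaves implicit in a single sentence.
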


Similarly, we have the following corollary regarding the fixed-parameter tractability of OMDCI.
\begin{corollary}
OMDCI has no FPT algorithm parameterized by the solution size unless P=co-NP.
\end{corollary}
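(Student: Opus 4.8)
The plan is to derive the corollary directly from Theorem~\ref{thm:0-OMDCI} by the same instantiation trick used for the approximation corollary. Suppose, toward a contradiction, that OMDCI parameterized by the solution size $k$ admits an FPT algorithm $\mathcal{B}$ running in time $f(k)\cdot|x|^{O(1)}$ on an instance $x$ with parameter $k$, where the underlying decision question is ``does the given instance $(M,A)$ have a solution of size at least $k$?''.

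First I would fix the parameter to the constant $k=1$. Then $\mathcal{B}$ decides, in time $f(1)\cdot|x|^{O(1)}=|x|^{O(1)}$ (polynomial, since $f(1)$ is a fixed constant), whether $(M,A)$ has a common subsequence of size at least $1$ meeting the two properties of Definition~\ref{def:gomdic}. Because solution sizes are non-negative integers, ``size at least $1$'' coincides with ``size $>0$'', so this is exactly the decision version of OMDCI; its complement --- no positive-size solution exists, i.e.\ the optimal solution has size $0$ --- is precisely 0-OMDCI.

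Next I would invoke Theorem~\ref{thm:0-OMDCI}: deciding 0-OMDCI is co-NP-hard. A polynomial-time decision procedure for 0-OMDCI (obtained by negating the output of $\mathcal{B}$ at $k=1$) therefore places a co-NP-hard problem in P, which gives $\mathrm{co\text{-}NP}\subseteq\mathrm{P}$ and hence $\mathrm{P}=\mathrm{co\text{-}NP}$, contradicting the hypothesis $\mathrm{P}\neq\mathrm{co\text{-}NP}$. I would also note that the hard instances produced in Section~4 are genuine OMDCI instances ($M$ is a permutation in both colors and critical characters), so the argument applies to OMDCI itself, and a fortiori to General-OMDCI.

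I do not expect a real obstacle here; the only point requiring care is the interpretation of ``parameterized by the solution size'', namely that fixing $k=1$ is a legitimate instantiation of the parameterized problem and that the resulting query then matches the $k>0$ decision version of Definition~\ref{def:gomdic}. This is the same observation underlying the preceding corollary on inapproximability.
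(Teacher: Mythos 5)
Your proposal is correct and follows essentially the same route as the paper: run the hypothetical FPT algorithm with a constant parameter value, so that it decides in polynomial time whether the optimum is zero, which by Theorem~\ref{thm:0-OMDCI} is co-NP-hard, forcing P\,=\,co-NP. The only cosmetic difference is that the paper first attaches two new matching blocks with fresh letters to $M$ and $A$ and argues via a size-$2$ solution on the padded instance, whereas you query the unpadded instance at $k=1$ under the standard ``solution of size at least $k$'' reading; both amount to the same constant-parameter slice argument.
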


\begin{proof}
Suppose there is an FPT algorithm ${\cal F}$ for OMDCI parameterized by the solution size $d$, which runs in $O(f(d)n^c)$ time, where $f(-)$ is any computable function and $c$ is a constant not related to the input length $n$. At the end of the instance $M$ and $A$ in Theorem~\ref{thm:0-OMDCI}, attach $2$ unique new blocks each with a distinct new letter. Running ${\cal F}$ on this new instance, if ${\cal F}$ returns a size-$2$ solution, then it would imply that the original instance has a zero solution.
By Theorem~\ref{thm:0-OMDCI}, the latter would further imply that ${\cal F}$ solves 
co-HC in polynomial (i.e., $O(f(2)n^c)$) time.
This is not possible unless P=co-NP. Therefore, OMDCI has no FPT algorithm parameterized by the solution size $d$ unless P=co-NP.
\end{proof}

A twist of the above proof, by attaching some intervals at the end of $M$ and $A$ such that they must all appear in the optimal solution, also implies that OMDCI is co-NP-hard. Hence we additionally have the following corollary.

\begin{corollary}
OMDCI is NP-complete and co-NP-hard.
\end{corollary}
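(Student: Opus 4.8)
The plan is to establish both hardness directions by reusing, with small modifications, the two reductions already proved in the paper. For the co-NP-hardness half, I would simply invoke Theorem~\ref{thm:0-OMDCI}: the reduction from co-Hamiltonian Cycle already produces an OMDCI instance $\langle M,A\rangle$ whose optimal solution has size $0$ exactly when $G$ has no Hamiltonian cycle, so OMDCI itself is co-NP-hard. To make this a clean statement about OMDCI (rather than about the promise problem 0-OMDCI), the twist hinted at in the excerpt is to append a short ``forced'' suffix to both $M$ and $A$ — say a single new colored block $\frac{z}{Z}$ with a fresh color $Z$ and fresh character $z$ — so that every nonempty solution must include that block (it is the unique occurrence of $Z$ and of $z$ in each string), and beyond it the instance behaves exactly as before. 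Then $\langle M,A\rangle$ has a solution of size $>1$ iff the original instance had a solution of size $>0$ iff $G$ has a Hamiltonian cycle; equivalently, the optimal OMDCI value is exactly $1$ iff $G$ has no Hamiltonian cycle. Deciding whether the optimum equals its trivial forced minimum is thus co-NP-hard, so OMDCI is co-NP-hard.

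For NP-completeness, NP membership is routine: a certificate is the pair of index sets $I_{M'},I_{A'}$, and one checks in polynomial time that the colors match positionwise and that the two multisets of critical characters coincide. For NP-hardness I would note that Theorem~\ref{thm1} already gives an X3C-reduction producing an OMDCI+ instance; since OMDCI+ instances are in particular OMDCI instances (the extra requirement $M'=M$ is only on the solution, and in that construction any OMDCI solution matching all of $M$'s colors is forced — each color of $M$ occurs once, so a ``yes'' OMDCI instance there is precisely a ``yes'' OMDCI+ instance), the same reduction shows OMDCI is NP-hard. Alternatively, and perhaps more cleanly, attach to the Theorem~\ref{thm1} construction nothing at all and simply argue that the threshold $k=|M|=3mq$ in the decision version is the only value that can be achieved by a color-matching subsequence, so ``OMDCI has a solution of size $\ge 3mq$'' is equivalent to the X3C instance being a yes-instance.

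The one subtlety I would be careful about is the interaction between the two appended gadgets and the ``permutation'' restriction defining OMDCI: in OMDCI, $\gamma_M$ and $\sigma_M$ must be permutation strings, so any block I append to $M$ must carry a genuinely new color and a genuinely new character, and I must make sure $A$ still contains that block in a position consistent with the rest of the matching — appending it as a literal suffix to both $M$ and $A$ handles this. I also need the appended block(s) not to create spurious solutions in the co-NP-hardness instance: because the new color/character are unique and placed strictly after everything else, they can only be matched to each other and only at the end, so they add exactly $1$ to every nonempty solution and nothing else. With that bookkeeping in place, the corollary follows by combining the modified Theorem~\ref{thm:0-OMDCI} instance (co-NP-hardness) with the Theorem~\ref{thm1} instance (NP-hardness) and the trivial NP-membership argument; I expect the main — though still minor — obstacle to be writing the forced-suffix argument carefully enough that it is evidently compatible with the permutation constraints on $M$.
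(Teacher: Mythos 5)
Your proposal is correct and takes essentially the same route as the paper's own (one-sentence) argument: NP-completeness comes from the Theorem~\ref{thm1} X3C construction (where asking for a solution of size $|M|=3mq$ forces $M'=M$, so the OMDCI+ equivalence transfers) together with the routine certificate check, and co-NP-hardness comes from appending fresh forced block(s) to the end of the Theorem~\ref{thm:0-OMDCI} instance so that the optimum shifts to a nonzero baseline and deciding whether the optimum equals that baseline is co-NP-hard. The only slip is the parenthetical claim that \emph{every} nonempty solution must include the appended block (and, similarly, that $3mq$ is the ``only achievable'' size in the X3C instance) --- what is actually true, and what your later bookkeeping correctly uses, is that the fresh block can only be matched to itself and can always be appended to any solution, so the optimum increases by exactly one; this does not affect the conclusion.
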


\section{Concluding Remarks}

In this paper, we investigated the computational complexity of detecting malware in low-level programs with feature matching algorithms. Essentially, we show that the problem is hard in both directions---determining the presence of some malware and verifying its absence are both computationally hard. Our method more generally shows that matching features in order data is computationally difficult. For example, trying to find a common sequence of events between two calendars which share the same participants forms a \omdci instance where critical characters are participants and the colors are the types of events.


Certainly---in reality---the scenario could be more complex. For instance, the new malware might be embedded in a program written in architecture B while the known malware is written in architecture C where B is based on a 32-bit machine while C is on a 64-bit machine. In this case, the set of instructions in B and C are different, even the number of instructions may not be the same, in which case one cannot obtain a one-to-one mapping between the instructions. In this case, it should be harder to
compare the similarity of two programs written in B and C directly. One idea is to use additional semantic information or structure of such programs; for instance, the known function of some blocks, or even the calling graph of blocks. We are currently working along this line.



\section{Appendix: proof of Lemma~24}
\begin{proof}
We start by showing the forward direction.
Assume \(G\) has a Hamiltonian cycle.
Let \(H\) be a Hamiltonian cycle in \(G\) and
assign the vertices in \(G\) ordering \(v_1, \ldots, v_n\).
Select some arbitrary vertex \(v_{q_1}\) and consider the walk
\(v_{q_1},\ldots,v_{q_{n}}\) around \(H\) that starts 
and ends at \(v_{q_1}\).

Let \(M\) and \(A\) be the two inputs to \omdci, where 
\(M = ([m], \gamma_M, \sigma_M)\) and \(A = ([n], \gamma_A, \sigma_A)\)
as described in
Section~\ref{subsec:instance}.

We start by constructing subsequences \(A'\) in \(A\) and \(M'\) in \(M\) 
before showing 
properties
\eqref{property:colorsaremakeacommonsubsequence} and
\eqref{property:setofcoveredcriticalinstructionsareequal} of Definition~\ref{def:gomdic}.
We start by considering the subset indices in \(M'\) and \(A'\)
in the \(Selection\) gadget which denote \(M'_{P}\) and \(A'_{P}\).
Let \(\Gamma'_{P} = \{P_{i, q_i}: i \in [N]\}\), then let
     \[ M'_{P} = \{i \in [m]: i \in \Gamma'_{P}\}, \quad A'_{P} = \{i \in [N]: i \in \Gamma'_{P}\}.\] 
Using Propositions~\ref{prop:p-to-s}, \ref{prop:s-to-e} and
\ref{prop:e-to-l-p} as a guide, we now consider the remaining
indices in \(A'\) and \(M'\).
Let
    \(\Gamma'_{S} = \{S_{i, q_i}: i \in [n]\}\)
and 
    \(\Gamma'_{T} = \{T_{i, q_i}: i \in [n]\}\).
The set of indices in \(M'\) and \(A'\) 
    with color in \(\cup_{i}\Gamma^{(S)}_{i}\) are respectively:
    \[I^{(M)}_{S} = \{i \in [m]: \gamma_M(i) \in \Gamma'_{S}\},\]
and
    \[I^{(A)}_{S} = \{i \in [N]: \gamma_A(i)\in \Gamma'_{S}\}.\]
The set of indices in \(M'\) and \(A'\) 
    with color in \(\cup_{i}\Gamma^{(T)}_{i}\) are respectively:
\[ I^{(M)}_{T} = \{i \in [m]: \gamma_M(i)\in \Gamma'_{T}\},\]
and
\begin{equation}\label{eq:a_t_def}
    I^{(A)}_{T} = \{i \in [N]: \gamma_A(i) = T_{i, q_i}
            \text{ and } \sigma_A(i) = p_{1 + (i \%n), q_{1 + (i\%n)}}\}.
\end{equation}
Now, let \(A'\) be the subsequence in \(\gamma_A\) defined by the sequence of indices \(I^{(A)}_{P} \cup I^{(A)}_{S} \cup I^{(A)}_{T}\),
and let \(M'\) be the subsequence in \(\gamma_M\) defined by the sequence of indices \( I^{(M)}_{P} \cup I^{(M)}_{S} \cup I^{(M)}_{T}\).

By Observations~\ref{obs:eachcolorappearsonceinM} and
\ref{obs:exactlyonecolorexceptedges}, each color in
\(\Gamma'_P\), \(\Gamma'_S\) appears only once in \(A\) and \(M\) 
thus appears only once in \(A'\) and \(M'\).
Further Observation~\ref{obs:eachcolorappearsonceinM} also implies each \(\Gamma'_E\)
appears once in \(M'\) and \(M\), and similarly \eqref{obs:endsoccurdegeachunique}
and \eqref{eq:a_t_def} imply the same for \(A\) and \(A'\).
Thus both \(A'\) and \(M'\) are subsequences of length \(3n\).
Note \(I^{(A)}_P\) and \(I^{(M)}_P\) each only use one index in each subgadget
of the \(MSelection\) and \(ASelection\); thus, the first \(n\) colors
of \(M'\) and \(A'\) are identical.
Similarly, we have \(S_{i, q_i}\) and \(E_{i, q_i}\) appearing in both \(M'\)
and \(A'\)  with \(S_{i,q_i}\) appearing in the \(PreLink_{i,q_i}\), which 
occurs before the \(PostLink_{i, q_i}\) containing \(E_{i,q_i}\).
As \(S_{i,q_i}\) and \(E_{i, q_i}\) are the only two characters appearing 
in the \(Linking_{q_i}\) subgadget and the \(MLinkin_{q_i}\) and \(ALinking_{q_i}\)
are concatenated together in the same order in both \(A\) and \(M\),
the last \(2n\) of \(A'\) and \(M'\) must be identical as well.
Therefore \(A'\) and \(M'\) must be a common subsequence of \(A\) and \(M\)
satisfying Property \ref{property:colorsaremakeacommonsubsequence} of Definition~\ref{def:gomdic}.

We now proceed to show \eqref{property:setofcoveredcriticalinstructionsareequal}
holds for \(A'\) and \(M'\).
Let \(\Sigma_A\) and \(\Sigma_M\) be the set of instructions in \(\Sigma\)
    that appear at least once in \(A'\) and \(M'\) respectively.
Note for \(i \in [n]\), there is an index with critical character
\(P_{i, q_i}, S_{i, q_i}, T_{i, q_i}\) in \(M'\) thus applying
Observation~\ref{obs:eachcolorappearsonceinM} tells us:
\begin{enumerate}
    \item \(p_{i, q_i} \in \Sigma_M\) as \(P_{i, q_i}\) 
        appears in \eqref{eq:mselwi},
    \item \(s_{i, q_i} \in \Sigma_M\) 
        as \(S_{i, q_i}\)  appears in
        \eqref{eq:mprelink-with-ij-upto-n-1},
    \item \(t_{i, q_i} \in \Sigma_M\) as  \(T_{i,q_i}\) appears 
        in \eqref{eq:mpostlink-with-ij}.
\end{enumerate}
We further have
by consequence of Observations~\ref{obs:exactlyonecolorexceptedges} and \ref{obs:exactlyoneciexceptpicksci}:
\begin{enumerate}
    \item \(p_{i, q_i} \in \Sigma_A\) as there is an index \(i'\) in \eqref{eq:apostlink-with-ij}
        where \(\sigma_A(i') = p_{i,q_i}\) and \(\gamma_A(i')=E_{i-1,q_i}\)
        for \(i \in [2,n]\) and \(\gamma_A(i')=E_{n,q_1}\) when
        \(i = n\).
    \item \(s_{i,q_i} \in \Sigma_A\) as \(P_{i, q_i}\) appears
        in \(A'\) and \eqref{eq:aselwi}.
    \item \(t_{i,q_i} \in \Sigma_A\) as \(S_{i, q_i}\) appears in \(A'\)
        and \eqref{eq:apostlink-with-ij}.
\end{enumerate}
As we have now considered every index in \(M'\), 
we have \(\Sigma_M \subseteq \Sigma_A\).
By Observation~\ref{obs:eachcolorappearsonceinM} we have 
\(|\Sigma_M| = |M'|\) thus
\(\Sigma_M = \Sigma_A\) as \(|M'| = |A'| > |\Sigma_A|\).   
This shows property~\ref{property:setofcoveredcriticalinstructionsareequal} in
Definition~\ref{def:gomdic}, which completes the forward direction.

We now show the reverse direction.
Let \(M'\) and \(A'\) be an OMDCI solution for \(M\) and \(A\).
As \(k>0\), there exists some \(c \in \Gamma\) that is a color of an
index in both \(A'\) and \(M'\).
Suppose \(c \not \in \cup_i^n \Gamma_i^{(P)}\) then 
Propositions~\ref{prop:s-to-e} and \ref{prop:e-to-l-p}
imply that there exists some \(c' \in \cup_i^n \Gamma_i^{(P)}\) that
is the color of an index in both \(M'\) and \(A'\).
Thus, we may always select a \(c \in \cup_i^n \Gamma_i^{(P)}\) that
is the color of an index in both \(A'\) and \(M'\).
Furthermore, as a consequence of Proposition~\ref{prop:one-per-selection}
and Lemma~\ref{prop:p-to-p}, we have \(n\) colors 
\(P_{1, h_1}, \ldots ,P_{n, h_n}\) that are the colors of indices in both \(M'\) and \(A'\) as well as colors 
\(S_{1,h_1}, \ldots S_{n, h_{n}}\) that are also the 
color of indices in both \(M'\) and \(A'\) where
\(v_{h_n} \sim_E v_{h_0}\) and \(v_{h_i} \sim v_{h_{i+1}}\) for 
\(i \in [n-1]\).
Now all that is left to show is that \(h_i \neq  h_j\) if \(i \neq j\).
We proceed by contradiction and assume without loss of generality 
\(i < j\) and \(h_i = h_j\).
Note then \(S_{i, h_i}\) and \(S_{j, h_j} = S_{j,h_i}\)
both appear in \(M'\) and \(A'\)
which is a contradiction by Proposition~\ref{prop:no_l_repeats}.
Therefore we have found \(n\) unique vertices such that
\(v_1 \sim_E v_2 \sim_E \ldots \sim_E v_n \sim_E v_1\) thus
\(G\) has a Hamiltonian cycle.
\end{proof}

\end{document}